\newcommand\semihuge{\@setfontsize\semihuge{22.3}{22.6}}
\newtheorem{theorem}{\bf Theorem}
\newtheorem{lemma}{\bf Lemma}
\begin{document}
\title{ \Huge Communications and Control for Wireless Drone-Based Antenna Array\vspace{-0.001cm}}    

\author{\IEEEauthorblockN{  Mohammad Mozaffari$^1$, Walid Saad$^2$, Mehdi Bennis$^3$, and M\'erouane Debbah$^4$}\vspace{-0.1cm}\\
	\IEEEauthorblockA{
		\small $^1$ Ericsson, Santa Clara, CA, USA, Email: \url{mohammad.mozaffari@ericsson.com}.\vspace{-0.00cm}\\
		$^2$ Wireless@VT, Electrical and Computer Engineering Department, Virginia Tech, VA, USA, Email:\url{walids@vt.edu}.\vspace{-0.00cm}\\
		$^3$ CWC - Centre for Wireless Communications, University of Oulu, Finland, Email: \url{bennis@ee.oulu.fi}.\\ 
		$^4$ Mathematical and Algorithmic Sciences Lab, Huawei France R\&D, Paris, France, and CentraleSup´elec,\\   Universit\'e Paris-Saclay, Gif-sur-Yvette, France, Email: \url{merouane.debbah@huawei.com}.
		\thanks{Mohammad Mozaffari joined Ericsson in July 2018. He was with Wireless@VT, Electrical and Computer Engineering Department, Virginia Tech, VA, USA, when this work was done.}
	}\vspace{-0.6cm}}
\maketitle

\begin{abstract}
	In this paper, the effective use of multiple quadrotor drones as an aerial antenna array that provides wireless service to ground users is investigated. In particular, under the goal of  minimizing the \emph{airborne service time} needed for communicating with ground users, a novel framework for deploying and operating a drone-based antenna array system whose  elements are single-antenna drones is proposed. In the considered model, the service time is minimized by minimizing the wireless transmission time as well as the control time that is needed for movement and stabilization of the drones. 
	 To minimize the transmission time, first, the antenna array gain is maximized by optimizing the drone spacing within the array. In this case, using perturbation techniques, the drone spacing optimization problem is addressed by solving successive, perturbed convex optimization problems. Then, according to the location of each ground user, the optimal locations of the drones around the array's center are derived such that the transmission time for the user is minimized. Given the determined optimal locations of drones, the drones must spend a control time to adjust their positions dynamically so as to serve multiple users. To minimize this control time of the quadrotor drones, the speed of rotors is optimally adjusted based on both the destinations of the drones and external forces (e.g., wind and gravity). In particular, using \emph{bang-bang} control theory, the optimal rotors' speeds as well as the minimum control time are derived in closed-form.  \textcolor{black}{Simulation results show that the proposed approach can significantly reduce the service time to ground users compared to a fixed-array case  in which the same number of drones form a fixed uniform antenna array. The results also show that, in comparison with the fixed-array case, the network's spectral efficiency can be improved by 32\% while leveraging the drone antenna array system. Finally, the results reveal an inherent tradeoff between the control time and transmission time while varying the number of drones in the array.}  

\end{abstract} 

\section{Introduction}
The use of unmanned aerial vehicles (UAVs) such as drones is growing rapidly
across many domains including delivery, communications, 
surveillance, and
search and rescue in emergency operations \cite{ mozaffari2018tutorial,Qin, alzenad, mozaffari2,TrajectoryZhang, wu2018uav}. In wireless networks, drones
can be used as flying base stations to provide reliable and cost-effective wireless connectivity \cite{mozaffari2,alzenad,TrajectoryZhang, Qin, Jeong, Letter, VshalUAV, LetterSudheesh,  bor, wu2018uav,mozaffari2018beyond}. Due to their flexibility, agility, and mobility, drones can support reliable, cost-effective, and high data rate wireless
communications for ground users. In particular,
during major public events such as Olympic games that generate a substantial demand for communication, there is a need to supplement the limited capacity and coverage capabilities of existing cellular networking infrastructure. In such scenarios, drone-based wireless communication is an ideal solution. For instance, AT\&T and Verizon are planning to use flying drones to boost the Internet coverage for
the college football national championship and the Super Bowl.
Drones can also play a key role in enabling wireless connectivity in other key scenarios such as public safety, and Internet of Things (IoT) scenarios \cite{mozaffari2}. To effectively leverage drones for wireless networking applications, one must address a number of challenges that include optimal placement of drones, path planning, resource management, control, and flight time optimization \cite{mozaffari2, Qin, bor}.

\subsection{Related work on UAV communications} 
There has been a recent surge of literature discussing the use of drones for wireless communication purposes \cite{mozaffari2,alzenad,Qin, Azari, bor, VshalUAV, Lyu, MozaffariFlightTime, Complition,TrajectoryZhang, wu2018uav,Jeong}. For instance, in \cite{alzenad}, the authors studied the optimal 3D placement of UAVs for maximizing the number of covered users  
with different quality-of-service (QoS) requirements. The works in \cite{Qin} and \cite{Jeong} studied path planning and optimal deployment problems for UAV-based communications and computing. 
The work in \cite{VshalUAV} proposed a framework for the optimal placement and distribution of UAVs to minimize the overall delay in a UAV-assisted wireless network.  A comparison between the performance of aerial base stations and terrestrial base stations in terms of average sum rate and transmit power is presented in \cite{Azari}. In \cite{Lyu}, a polynomial-time algorithm for the optimal placement of drones that provide coverage for ground terminals is proposed.

One of the fundamental challenges in drone-based communications systems is the limited flight endurance of drones. Naturally, flying drones have a limited amount of on-board energy which must be used for transmission, mobility, control,
data processing, and payloads purposes. Consequently, the flight duration of
drones is typically short and can be insufficient for providing a long-term, continuous wireless coverage. Furthermore, due to the limited transmit power of drones, providing long-range, high rate, and low latency communications can be challenging in drone-enabled wireless systems. In this regard, a key performance metric in drone-enabled wireless networks is \emph{airborne service time}, which is defined as the time needed for servicing ground users. The service time directly impacts the flight time of drones as well as the quality-of-service (i.e., delay) for ground users.  From the drones' perspective, a
lower service time corresponds to a shorter flight time as well as less energy consumption. From the users' point of view, a lower service time is also needed as it directly yields lower latency.  \textcolor{black}{To address the flight time and energy consumption challenges of drones, the authors  in \cite{TrajectoryZhang} proposed a comprehensive analytical framework for optimizing the trajectory of a fixed-wing UAV with the objective of minimizing the UAV's energy consumption while serving a ground user. In particular, a new design paradigm is developed that jointly considers the communication rate and the UAV's energy consumption.} The work in \cite{MozaffariFlightTime} minimized the hover time of drone base stations by deriving the optimal cell association schemes. However, the model in \cite{MozaffariFlightTime} is limited to static single-antenna drones.  
In \cite{Complition}, the trajectory and mission completion time of a single UAV that serves ground users are optimized. However, the work in \cite{Complition} does not analyze a scenario with multiple UAVs.


One promising approach to provide high data rate and low service time is to utilize multiple drones within an antenna array system composed of multiple single-antenna drones \cite{Garza}.
 Compared to conventional antenna array systems, a drone-based antenna array has the following advantages. First, the number of antenna elements (i.e., drones) is not limited by space constraints. Second, the gain of the drone-based antenna array can be increased by adjusting the array element spacing. Third, the mobility and flexibility of drones enable an effective mechanical beam-steering in any three-dimensional (3D) direction. Clearly, a high gain drone-based antenna array can provide high data rate wireless services to ground users thus reducing the service time.      


In \cite{Garza}, the authors studied the design of a UAV-based antenna array for directivity maximization. However, the approach presented in \cite{Garza} is based on a heuristic and a computationally demanding evolutionary algorithm. Moreover, the service time analysis is ignored in \cite{Garza}. In \cite{Weif}, the authors derived the asymptotic capacity of an airborne multiple-input-multiple-output (MIMO)  wireless communication system. However, the work in \cite{Weif} considers fixed positions for the antenna elements of the transmitter and the receiver. Furthermore, this work does not analyze the control aspect of drones which is essential in designing drone-based MIMO systems. In fact, none of the previous works on drone communications, such as in \cite{mozaffari2, alzenad,Qin,Azari, VshalUAV, Jeong, Lyu, bor,Complition,Garza,Weif,MozaffariFlightTime,wu2018uav,mozaffari2018beyond,Naderi,LetterSudheesh,Letter}, has studied the use of a drone-based antenna array system for service time minimization.


We note that, there exist some studies on time-optimal motion planning \cite{manipulators, Robust, computationally, TimeOptimalDrone}. However, most of the previous works do not address the time-optimal control problem of quadrotor drones. While the authors in \cite{TimeOptimalDrone} consider a quadrotor drone in their model, they ignore the effect of external forces on the control time. Furthermore, the approach in \cite{TimeOptimalDrone} is based on a genetic algorithm which is computationally demanding. Unlike our work, the work in \cite{TimeOptimalDrone} ignores the communication aspects of drones, and does not capture the impact of control time on the performance of drone-enabled wireless networks. Compared to \cite{TimeOptimalDrone}, our proposed framework comprises both communication and control aspects of drones and it is analytically tractable. 
\subsection{Contributions}
The main contribution of this paper is a novel framework for deploying and operating a drone-based antenna array system that delivers wireless service to a number of ground users within a minimum time. In particular, we minimize the service time that includes both the transmission time and the control time needed to control the movement and orientation of the drones. To this end, we minimize the transmission time, by optimizing the drones' locations, as well as the control time that the drones need to move between these optimal locations. To minimize the transmission time,  first, we determine the optimal drone spacing for which the array directivity is maximized. In this case, using perturbation theory \cite{OptimizationPertub}, we solve the drone spacing optimization problem by successively solving a number of perturbed convex optimization problems. Next, given the derived drone spacing, we optimally adjust the locations of the drones according to the position of each ground user. In order to serve different users, the drones must dynamically move between the derived optimal locations, during the control time period. To minimize the control time of quadrotor drones, we determine the optimal speeds of rotors such that the drones can update their positions and orientations within a minimum time. In this case, using \emph{bang-bang} control theory \cite{IntroControl}, we derive a closed-form expression for the minimum control time as a function of external forces (e.g., wind and gravity), the drone's weight, and the destinations of drones. \textcolor{black}{Our results show that the proposed drone antenna array approach can significantly reduce the service time and improve the spectral and energy efficiency of the network. In particular, our approach yields 32\% improvement in spectral efficiency compared to a case in which the same number of drones form a fixed uniform aerial antenna array. The results also reveal a tradeoff between the control time and transmission time while varying the number of drones.}

\section{System Model and General Problem Formulation}

Consider a set $\mathcal{L}$ of $L$ single-antenna wireless users located within a given geographical area. In this area, a set  $\mathcal{M}$ of $M$ quadrotor drones are used as flying access points to provide downlink wireless service for ground users. The $M$ drones will form an antenna array in which each element is a single-antenna drone, as shown in Fig.\,\ref{SystemModel}. For tractability, we consider a linear antenna array whose elements are symmetrically excited and located about the origin of the array as done in \cite{Purturbation}.  
The results that we will derive for the linear array case can provide a key guideline for designing more complex 2D and 3D array configurations. The 3D location of drone $m\in \mathcal{M}$ and of user $i\in \mathcal{L}$  is given by ($x^\textrm{u} _{i},y^\textrm{u} _{i},z^\textrm{u}_i)$, and the location of drone $m$ while serving user $i$ is $(x_{m,i},y_{m,i},z_{m,i})$. 
To avoid collisions, we assume that adjacent drones in the array are separated by at least  $D_\textrm{min}$. Let $a_m$ and $\beta_m$ be the amplitude and phase of the signal (i.e. excitation) at element $m$ in the array. 
Let \textcolor{black}{${d_{m,i}} = \sqrt {{{(\textcolor{black}{x_{m,i}} - {x_{o}})}^2} + {{({y_{m,i}} - {y_{o}})}^2} + {{({z_{m,i}} - {z_{o}})}^2}}$} be the distance of drone $m$ from the origin of the array whose 3D coordinate is $(x_o,y_o,z_o)$.   
The magnitude of the far-field radiation pattern of each element is $w{(\theta,\phi)}$, where $\theta$ and $\phi$ are the polar and azimuthal angles in the spherical coordinate.

\begin{figure}[!t]
	\begin{center}
		\vspace{-0.1cm}
		\includegraphics[width=8.5 cm]{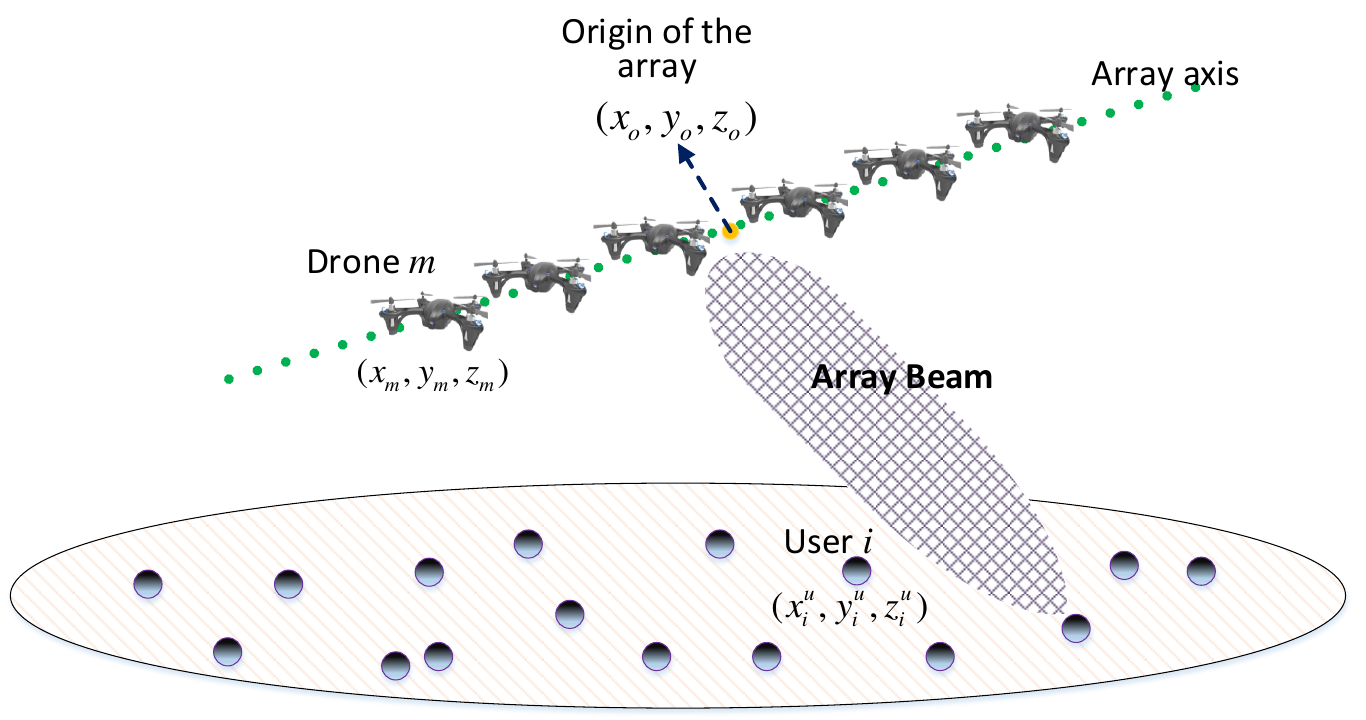}
		\vspace{-0.1cm}
		\caption{  \small Drone-based antenna array.}\vspace{-0.3cm}
		\label{SystemModel}
	\end{center}	
\end{figure}

To serve ground users distributed over a geographical area, the drones will dynamically change their positions based on each user's location.  \textcolor{black}{In our model, drones hover at specific locations to serve a user, and fly to a new position to serve
another user.} Such repositioning is needed for adjusting the distance and beam direction of the antenna array to each ground user. \textcolor{black}{We consider a ``fly-then-hover-and-transmit" operation (as also done in \cite{HoverZhang}) for the drone-based antenna array system. In this case, drones transmit when they are stationary and, hence, transmission is not performed while the array moves. Such a transmission protocol is suitable for the considered drone-based antenna array system since the antenna array needs to be stable so as to effectively perform beamforming and to establish reliable communication links to ground users.} Note that, unlike a classical linear phased array that uses electronic beam steering, the proposed drone-based antenna array relies on the repositioning of drones{\footnote{\textcolor{black}{In general, the array gain depends on the elements' positions and the phase of the elements. In classical antenna array systems with fixed elements, the phase of the elements is often optimized. Here, we exploit the drones' flexibility to maximize the array directivity by optimizing the element (i.e., drone) spacing, given the elements' phases.}}}. This is due to the fact that, in the drone antenna array, precisely adjusting the elements' phase is more challenging than the phased array whose elements are directly connected. In addition, a linear phased array cannot perform 3D beam steering. Hence, in our model, the drones dynamically adjust their positions in order to steer the beam towards ground users.    
Clearly, the \emph{service time}, which is the time needed to serve the ground users, depends on the transmission time  and the control time during which the drones must move and stabilize their locations. The transmission time is inversely proportional to the downlink data rate which depends on the signal-to-noise-ratio (SNR) which is, in turn, function of the array's beamforming gain. 

The service time is an important metric for both users and drones. A lower service time yields a lower delay and, hence, higher quality-of-service for the users. Also, the service time is directly related to spectral efficiency as it depends on data rate and transmission bandwidth. For drones, a lower service time corresponds to a shorter flight time and less energy consumption. 
In fact, minimizing the service time improves both energy and spectral efficiency. Therefore, our goal is to minimize the total service time of the ground users by optimally adjusting the drones' locations, within a minimum control time, that can provide a maximum data rate.\\   
\indent For drone-to-ground communications, we consider a line-of-sight (LoS) propagation model as  done in \cite{Qin} and \cite{Complition}. Such a channel model is reasonable here as the effect of multipath is significantly mitigated due to the high altitude of drones and using beamforming \cite{Complition}. The transmission rate from the drone antenna array to ground user $i$ in a far-field region is given by \cite{Complition}: 
\begin{equation}
	R_i(\boldsymbol{x}_i,\boldsymbol{y}_i,\boldsymbol{z}_i)={B{{\log }_2}\left( {1 + \frac{{r_i^{ - \alpha }{P_t}K_o{G_i}(\boldsymbol{x}_i,\boldsymbol{y}_i,\boldsymbol{z}_i)}}{{{\sigma ^2}}}} \right)}, \label{Rate}
\end{equation}
where $\boldsymbol{x}_i=[x_{m,i}]_{M\times1}$, $\boldsymbol{y}_i=[y_{m,i}]_{M\times1}$, $\boldsymbol{z}_i=[z_{m,i}]_{M\times1}$, $m\in \mathcal{M}$ representing the 3D coordinates of the drones while serving user $i$.  
  $B$ is the transmission bandwidth, $r_i$ is the distance between the origin of the array and user $i$, $P_t$ is the total transmit power of the array, $\sigma^2$ is the noise power, and $K_o$ is the constant path loss coefficient. ${G_i}(\boldsymbol{x}_i,\boldsymbol{y}_i,\boldsymbol{z}_i)$ is the gain of the antenna array towards the location of user $i$. \textcolor{black}{In the proposed drone-based antenna array system, each drone is an antenna element of the array. In this case, the entire antenna array can be modeled as a single directional antenna whose gain is the total array gain \cite{Rabert1}.} The array gain is  given by \cite{AntennaDesign}: 
\begin{equation}
	{G_i}(\boldsymbol{x}_i,\boldsymbol{y}_i,\boldsymbol{z}_i) = \frac{{4\pi {{\left| {F({\theta _i},{\phi _i})} \right|}^2}w{{({\theta _i},{\phi _i})}^2}}}{{\int\limits_0^{2\pi } {\int\limits_0^\pi  {{{\left| {F(\theta ,\phi )} \right|}^2}w{{(\theta ,\phi )}^2}\sin \theta \textrm{d}\theta \textrm{d}\phi } } }}\eta, \label{gain}
\end{equation}
where $0\le \eta \le 1$ is the antenna array efficiency which is multiplied by directivity to compute the antenna gain. In fact, the antenna gain is equal to the antenna directivity multiplied by $\eta$. In (\ref{gain}), $F(\theta , \phi)$ is the array factor which can be written as \cite{AntennaDesign}:
\begin{equation}
F(\theta ,\phi ) \hspace{-0.1cm}= \hspace{-0.1cm}\sum\limits_{m = 1}^M {\hspace{-0.1cm}{a_m}{e^{j\left[ {k\left( {{x_{m,i}}\sin \theta \cos \phi  + {y_{m,i}}\sin \theta \sin \phi  + {z_{m,i}}\cos \theta } \right) + {\beta _m}} \right]}}}, \label{arrayFactor}
\end{equation}
where $k=2\pi/\lambda$ is the phase constant, and $\lambda$ is the wavelength. 
Note that, the overall radiation pattern of the antenna array is equal to $F(\theta ,\phi) w({\theta _i},{\phi _i})$ which follows from the pattern multiplication rule \cite{AntennaDesign}.

Now, the total time that the drones spend to service the ground users will be:
\begin{equation} \label{Ttot}
{\textcolor{black}{T_\textrm{service}}} = \sum\limits_{i = 1}^L {\frac{{{q_i}}}{{{R_i}(\boldsymbol{x}_i,\boldsymbol{y}_i,\boldsymbol{z}_i)}} + T_i^\textrm{crl}({\boldsymbol{V}},\boldsymbol{x}_i,\boldsymbol{y}_i,\boldsymbol{z}_i)}, 
\end{equation}
where \textcolor{black}{$T_\textrm{service}$} represents the total service time, $q_i$ is the load of user $i$  which represents the number of bits that must be transmitted to user $i$. $T_i^\textrm{crl}$ is the control time during which the drones adjust their locations according to the location of ground user $i$. In particular,  $T_i^\textrm{crl}$ captures the time needed for updating the drones' locations from state $i-1$ (i.e., locations of drones while serving user $i-1$, $i>1$) to state $i$. The control time is obtained based on the dynamics of the drones and is a function of control inputs, external forces, and the movement of drones. In fact, each drone needs a vector of control inputs in order to move from its initial location to a new location while serving different users. For quadrotor drones, the rotors' speeds are commonly considered as control inputs. Therefore,  in (\ref{Ttot}), we have $\boldsymbol{V}=[v_{mn}(t)]_{M\times4}$ with $v_{mn}(t)$ being  the speed of rotor $n$ of drone $m$ at time $t$. The maximum speed of each rotor is $v_\textrm{max}$.  In this case, one can minimize the control time of the drones by properly adjusting the rotors' speeds.  In Section \ref{TimeControlSec}, we will provide a detailed analysis of the control time given the drones' dynamics.

\textcolor{black}{Clearly, to effectively employ drones within an aerial antenna array, it is crucial to ensure the stability of the drones. Hence, in the proposed drone-based antenna array system,  we adopt quadrotor drones which can hover (remain stationary) and move to any direction \cite{zhang}. In Section IV, we analyze the stability of the drones in the array when serving ground users. We derive the optimal rotors' speeds for which the quadrotor drones can stabilize their positions. Moreover, we account for wind effects while analyzing the drones' stability\footnote{\textcolor{black}{We also note that the proposed drone-based antenna array system is more suitable for a low frequency (e.g., below 600 MHz) case in which the wavelength is above 0.5 m. In this case, the array performance will not be significantly affected by drones' vibrations.}}.}

Given this model, our goal is to minimize the total service time of drones by finding the optimal locations of the drones with respect to the center of the array, as well as the optimal control inputs.  Our optimization problem, in its general form, is given by: \vspace{-0.2cm}
\begin{align} 
 &\mathop {\textrm{minimize} }\limits_{\boldsymbol{X},\boldsymbol{Y},\boldsymbol{Z},\boldsymbol{V}} \sum\limits_{i = 1}^L {\frac{{{q_i}}}{{{R_i}(\boldsymbol{x}_i,\boldsymbol{y}_i,\boldsymbol{z}_i)}} + T_i^\textrm{crl}({\boldsymbol{V}},\boldsymbol{x}_i,\boldsymbol{y}_i,\boldsymbol{z}_i)}, \label{OPT1-1}\\
 \textrm{st.}\,\, &d_{{m+1},i}-d_{m,i}\ge D_\textrm{min},\,\, \forall m \in \mathcal{M}\backslash{\{M\}}, \label{Dmin}\\
  &\textcolor{black}{0\le v_{mw}(t)\le v_\textrm{max},\,\, \forall m\in \mathcal{M}, w \in \{1,...,4\},} \label{SpeedLim}
 \end{align}
 where $\boldsymbol{X}$, $\boldsymbol{Y}$, and $\boldsymbol{Z}$ are matrices whose rows $i$ are, respectively, vectors $\boldsymbol{x}_i$, $\boldsymbol{y}_i$, and $\boldsymbol{z}_i$, $\forall i\in\mathcal{L}$. The constraint in (\ref{Dmin}) indicates that the minimum separation distance between two adjacent drones must be greater than $D_\textrm{min}$ to avoid collision. (\ref{SpeedLim}) represents the constraints on the speed of each rotor. Note that, the first term in (\ref {OPT1-1}) represents the transmission time which depends on the drones' locations. The second term, $T_i^\textrm{crl}$, is the control time which is a function of the rotors' speeds as well as the drones' locations.   
Solving (\ref{OPT1-1}) is challenging as it is highly nonlinear due to (\ref{gain}). Moreover, as we can see from (\ref{arrayFactor}), the array factor is a complex function of the array element's positions. In addition, due to the nonlinear nature of quadrotor's dynamic system, finding the optimal control inputs is a challenging task, as will be discussed in Section \ref{TimeControlSec}.  
 
We note that, considering a narrow-beam antenna array communication, (\ref{OPT1-1}) can be solved by separately optimizing drones' locations and rotors' speeds.  In the narrow-beam case, the drone array must perfectly steer its beam towards each ground user. Hence, we can first determine the optimal drones' positions and, then, optimize the rotors' speeds to move to these optimal positions within a minimum time. Our approach for solving (\ref{OPT1-1}) includes two key steps. First, given the location of any ground user, we optimize the locations of the drones in the linear array to minimize the transmission time. Thus, given $L$ ground users, we will have  $L$ sets of drones' locations.  In the second step, using the result of the first step, we determine the drones' optimal control strategy to update their locations within a minimum time. Hence, the solution of the transmission time optimization  problem (in the first step) is used as inputs to the time-optimal control problem (in the second step). While, in general, this approach leads to a suboptimal solution, it is analytically tractable and practically easy to implement.  Next, we will optimize the location of drones to achieve a minimum transmission time for any arbitrary ground user.




\section{Optimal Positions of Drones in Array for Transmission Time Minimization } \label{SecDirectivity}
In this section, we determine the optimal positions of the drones in the array based on the location of each user such that the transmission time to the user is minimized. Clearly, given (\ref{Rate}), (\ref{gain}), and (\ref{Ttot}), to minimize the transmission time, we need to maximize the array gain (i.e., directivity) towards each ground user.

Without loss of generality, we consider an even number of drones. For an odd number of drones, the same analysis will still hold.  Now, the array factor for $M$ drones located on the $x$-axis of the Cartesian coordinate can be given by:
\begin{align}
&F(\theta ,\phi ) = \sum\limits_{m = 1}^M {{a_m}{e^{j\left[ {k{x_{m,i}}\sin \theta \cos \phi  + {\beta _m}} \right]}}} \nonumber\\
 & \stackrel{(a)}{=}\sum\limits_{n = 1}^{M/2} {{a_n}\left( {{e^{j\left[ {k{d_n}\sin \theta \cos \phi  + {\beta _n}} \right]}} + {e^{ - j\left[ {k{d_n}\sin \theta \cos \phi  + {\beta _n}} \right]}}} \right)} \nonumber\\
&\stackrel{(b)}{=} 2\sum\limits_{n = 1}^N {{a_n}\cos \left( {k{d_n}\sin \theta \cos \phi  + {\beta _n}} \right)},\label{MN} 
\end{align}
 where  $N=M/2$, and $d_n$ is the distance of element $n \in \mathcal{N}=\{1,2,...,N\}$ from the center of the array (origin). Also, $(a)$ follows from the fact that the array is symmetric with respect to the origin, and $(b)$ is based on the Euler's rule.

Now, we can maximize the directivity of the array by optimizing $d_n$, $\forall n \in \mathcal{N}$:  
\begin{align} 
&\mathop {\textrm{maximize} }\limits_{{d_n},\forall n \in \mathcal{N}}  \frac{{4\pi {{\left| {F({\theta_{\max}},{\phi_{\max}})} \right|}^2}w{{({\theta_{\max}},{\phi_{\max}})}^2}}}{{\int\limits_0^{2\pi } {\int\limits_0^\pi  {{{\left| {F(\theta ,\phi )} \right|}^2}w{{(\theta ,\phi )}^2}\sin \theta \textrm{d}\theta \textrm{d}\phi } } }} \label{d1}, 
\end{align}
where $({\theta_{\max}},{\phi_{\max}})$ are the polar and azimuthal angles at which the total antenna pattern $F(\theta,\phi)w(\theta,\phi)$ has a maximum value. 
Clearly, solving (\ref{d1}) is challenging due to the non-linearity and complex expression of the objective function of this optimization problem. Moreover, this problem is non-convex and, hence, cannot be exactly solved using classical convex optimization methods. Next, we solve (\ref{d1}) by exploiting the perturbation technique \cite{Purturbation}. In general, perturbation theory aims at finding the solution of a complex problem, by starting from the exact solution of a simplified version of the original problem \cite{OptimizationPertub}. This technique is thus useful when dealing with nonlinear and analytically intractable optimization problems such as (\ref{d1}). \vspace{-0.1cm} 

\subsection{Perturbation Technique for Drone Spacing Optimization} \label{spaceP}

To optimize the distance between drones, we first consider an initial value for the distance of each drone from the origin. Then, we find the optimal perturbation value that must be added to this initial value. Let $d^0_n$ be the initial distance for drone $n$, the perturbed distance is:\vspace{-0.3cm} 
\begin{equation}
{d_n} = d_n^0 + {e_n}, \label{dn}
\end{equation}
where $e_n<<\lambda$, with $\lambda$ being the wavelength, is the perturbation value.
Given (\ref{dn}), the array factor can be approximated by:\vspace{-0.2cm}
\begin{align}
&F(\theta ,\phi ) = 2\sum\limits_{n = 1}^N {{a_n}\cos \left( {k(d_n^0 + {e_n})\sin \theta \cos \phi  + {\beta _n}} \right)} \nonumber\\
&= 2\sum\limits_{n = 1}^N {{a_n}\cos \left[ {\left( {kd_n^0\sin \theta \cos \phi  + {\beta _n}} \right) + k{e_n}\sin \theta \cos \phi } \right]}\nonumber \\
&\mathop  \approx \limits^{(a)} \sum\limits_{n = 1}^N {2{a_n}\cos \left( {kd_n^0\sin \theta \cos \phi  + {\beta _n}} \right)}\nonumber \\
&\,\,\,\,\,- \sum\limits_{n = 1}^N {2{a_n}k{e_n}\sin \theta \cos \phi \sin \left( {kd_n^0\sin \theta \cos \phi  + {\beta _n}} \right)}, \label{approx}
\end{align}
where in ($a$) we used the trigonometric properties, and the fact that $\textrm{sin}(x) \approx x$ for small values of $x$.  
Clearly, given $e_n<<\lambda$, the numerator of  (\ref{d1}) can be computed based on the values of $d_n^0$, $\forall n \in \mathcal{N}$. Hence, given $d_n^0$, our optimization problem in (\ref{d1}) can be written as:
\begin{align} 
&\mathop {\textrm{min} }\limits_{{\boldsymbol{e}}}  {{\int\limits_0^{2\pi } {\int\limits_0^\pi  {{{ {F(\theta ,\phi )} }^2}w{{(\theta ,\phi )}^2}\sin \theta \textrm{d}\theta \textrm{d}\phi } } }} \label{d3},\\%
\textrm{s.t.}\,\,\, &{d_{n+1}^0+e_{n+1}-d_n^0-e_n}\ge D_\textrm{min}, \,\, \forall n \in \mathcal{N}\,\backslash{\{N\}}, \label{emin}
\end{align} 
where $\boldsymbol{e}$ is the perturbation vector having elements $e_n$, $n \in \mathcal{N}$.  

For brevity, we define the following functions:\vspace{-0.10cm}
\begin{align}
&{F^0}(\theta ,\phi ) = \sum\limits_{n = 1}^N {{a_n}\cos \left( {kd_n^0\sin \theta \cos \phi  + {\beta _n}} \right)}, \\
&{I_{{\mathop{\rm int}} }}(x) = \int\limits_0^{2\pi } {\int\limits_0^\pi  {x\sin \theta \textrm{d}\theta \textrm{d}\phi } }.\label{IintF}
\end{align}

\begin{theorem} \label{Theor1}
	\normalfont
	The optimization problem in (\ref{d3}) is convex, and the optimal perturbation vector is the solution of the following system of equations:
	\begin{equation}
	\begin{cases}
	&\hspace{-0.3cm}\boldsymbol{e} = {\boldsymbol{G}^{ - 1}}[\boldsymbol{q}+\boldsymbol{\mu_\mathcal{L}}], \label{e_opt}\\
	&\hspace{-0.3cm} \mu_n\left( {{e_n} - {e_{n + 1}} + {D_{\textrm{min} }} + d_n^0 - d_{n + 1}^0} \right)=0, \,\, \forall n \in \mathcal{N}\,\backslash{\{N\}},\\
	&\hspace{-0.3cm} \mu_n \ge 0, \,\, \forall n \in \mathcal{N}\,\backslash{\{N\}}.
	\end{cases}
	\end{equation}
	where $\boldsymbol{G}=[g_{m,n}]_{N \times N}$ is an $N \times N$ matrix with:\vspace{-0.2cm}
	\begin{align}
	&{g_{m,n}} = {I_{{\mathop{\rm int}} }}\biggl( {a_m}{a_n}{{\left( {k\sin \theta \cos \phi w(\theta ,\phi )} \right)}^2} \nonumber\\
	&\times\sin \left( {kd_n^0\sin \theta \cos \phi  + {\beta _n}} \right)  \sin \left( {kd_m^0\sin \theta \cos \phi  + {\beta _m}} \right) \biggr), \label{G}
	\end{align}\vspace{-0.02cm}
	and $\boldsymbol{q}=[q_n]_{N\times 1}$ whose elements are given by:
	\begin{align}
	{q_n} = {I_{{\mathop{\rm int}} }}\biggl( {a_n}k\sin \theta \cos \phi w(\theta ,\phi ){F^0}\left( {\theta ,\phi } \right) \nonumber\\
	\times \sin \left( {kd_n^0\sin \theta \cos \phi  + {\beta _n}} \right) \biggr). \label{q}
	\end{align}
	In (\ref{e_opt}), $\boldsymbol{\mu_\mathcal{L}}$ is a vector of  Lagrangian multipliers, whose element $n$ is
	$\boldsymbol{\mu_\mathcal{L}}(n)=\mu_{n+1}-\mu_n$,
	with $\mu_n$ being a Lagrangian multiplier associated with constraint $n$. 
\end{theorem}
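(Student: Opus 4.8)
The plan is to exploit the central feature of the perturbation expansion (\ref{approx}): it makes the array factor \emph{affine} in the perturbation vector $\boldsymbol{e}$, which turns the objective of (\ref{d3}) into a quadratic form in $\boldsymbol{e}$ and lets standard convex-optimization machinery apply. First I would rewrite $F(\theta,\phi)$ from (\ref{approx}) compactly as $F = 2F^0(\theta,\phi) - 2k\sin\theta\cos\phi\sum_{n=1}^N a_n e_n\sin(kd_n^0\sin\theta\cos\phi+\beta_n)$, which for each fixed $(\theta,\phi)$ has the structure $F = (\text{const}) + (\text{linear in }\boldsymbol{e})$. Squaring produces $F^2 = 4(F^0)^2 - (\text{linear}) + (\text{quadratic})$, and since the operator $I_{\mathrm{int}}(\cdot)$ defined in (\ref{IintF}) is a linear integral with the nonnegative weight $\sin\theta$, applying it to $F^2 w^2$ yields an objective of the form $J(\boldsymbol{e}) = \boldsymbol{e}^{\top}\boldsymbol{G}\boldsymbol{e} - 2\boldsymbol{q}^{\top}\boldsymbol{e} + \text{const}$. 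Collecting the coefficient of $e_m e_n$ reproduces the entry $g_{m,n}$ in (\ref{G}), and collecting the coefficient of $e_n$ reproduces $q_n$ in (\ref{q}); this identification is pure bookkeeping once the square is expanded.

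Next I would establish convexity. The key observation is that $\boldsymbol{G}$ is a Gram-type matrix: for any $\boldsymbol{e}$ one has $\boldsymbol{e}^{\top}\boldsymbol{G}\boldsymbol{e} = I_{\mathrm{int}}\bigl([k\sin\theta\cos\phi\,w(\theta,\phi)\sum_{n} a_n e_n\sin(kd_n^0\sin\theta\cos\phi+\beta_n)]^2\bigr)\ge 0$, because the integrand is a perfect square multiplied by $\sin\theta\ge0$ on $[0,\pi]$. Hence $\boldsymbol{G}\succeq 0$ and $J$ is convex, while the spacing constraints (\ref{emin}) are affine in $\boldsymbol{e}$ and therefore cut out a polyhedral (convex) feasible set. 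Together these prove that the problem in (\ref{d3}) is convex.

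Finally, since the constraints are affine, the linear-constraint qualification holds and the KKT conditions are both necessary and sufficient for optimality. I would form the Lagrangian $L(\boldsymbol{e},\boldsymbol{\mu}) = J(\boldsymbol{e}) + \sum_{n=1}^{N-1}\mu_n(e_n - e_{n+1} + D_\textrm{min} + d_n^0 - d_{n+1}^0)$ with $\mu_n\ge0$ and impose stationarity $\nabla_{\boldsymbol{e}}L = 0$, giving $2\boldsymbol{G}\boldsymbol{e} = 2\boldsymbol{q} - \boldsymbol{c}$, where $\boldsymbol{c}$ collects the constraint gradients. Because each multiplier $\mu_n$ enters with a $+1$ coefficient on $e_n$ and a $-1$ coefficient on $e_{n+1}$, the $n$-th component of $\boldsymbol{c}$ telescopes into exactly the difference captured by $\boldsymbol{\mu_\mathcal{L}}(n)=\mu_{n+1}-\mu_n$. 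Solving for $\boldsymbol{e}$, after absorbing the constant prefactor, yields $\boldsymbol{e}=\boldsymbol{G}^{-1}[\boldsymbol{q}+\boldsymbol{\mu_\mathcal{L}}]$ as in (\ref{e_opt}); appending the complementary slackness conditions $\mu_n(e_n - e_{n+1} + D_\textrm{min} + d_n^0 - d_{n+1}^0)=0$ and the dual feasibility $\mu_n\ge0$ then recovers the full stated system.

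The main obstacle I anticipate is twofold. The first part is the careful bookkeeping in matching the expanded quadratic term by term to the definitions (\ref{G}) and (\ref{q}), keeping track of the powers of $w(\theta,\phi)$ and the constant prefactors that arise from the factor-of-two coefficients in (\ref{approx}). The second, more substantive part is justifying that $\boldsymbol{G}$ is genuinely \emph{invertible} rather than merely positive semidefinite, since the inverse $\boldsymbol{G}^{-1}$ must exist for (\ref{e_opt}) to be meaningful; this requires the weighted functions $\{a_n k\sin\theta\cos\phi\,w(\theta,\phi)\sin(kd_n^0\sin\theta\cos\phi+\beta_n)\}_{n=1}^N$ to be linearly independent over the sphere, which is a mild genericity condition on the initial spacings $d_n^0$ and the excitations $a_n$.
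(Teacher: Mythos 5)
Your proposal is correct and follows essentially the same route as the paper's own proof: expand $F^2$ using the affine dependence on $\boldsymbol{e}$, identify the quadratic form with $\boldsymbol{G}$ and $\boldsymbol{q}$, show $\boldsymbol{G}\succeq 0$ via the Gram-type identity $\boldsymbol{y}^{T}\boldsymbol{G}\boldsymbol{y}=I_{\mathrm{int}}\bigl([\sum_n z_n y_n]^2\bigr)\ge 0$, and then apply the KKT conditions to the Lagrangian with the telescoping multiplier vector $\boldsymbol{\mu_\mathcal{L}}$. Your remark that invertibility of $\boldsymbol{G}$ requires strict positive definiteness (linear independence of the weighted basis functions) is a legitimate point that the paper itself leaves implicit, but it does not change the argument.
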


\begin{proof}
	See Appendix $A$.
\end{proof}

Using Theorem \ref{Theor1}, we can update the distance of each drone from the origin as follows:\vspace{-0.24cm}
\begin{equation}
{\boldsymbol{d}^1} = \boldsymbol{d}^0 + {\boldsymbol{e}^*}, \label{d_opt}
\end{equation}
where $\boldsymbol{d}^1=[d^1_n]_{N\times1}$, and $\boldsymbol{d}^0=[d^0_n]_{N\times1}$, $n\in \mathcal{N}$.

Clearly, $\boldsymbol{d}^1$ leads to a better solution than $\boldsymbol{d}^0=[d_n]_{N\times 1}$. In fact, we can proceed and further improve the solution to (\ref{d3}) by updating $\boldsymbol{d}^1$. In particular, at step update $r\in \mathds{N}$, we find  $\boldsymbol{d}^{(r)}$:\vspace{-0.20cm}
\begin{equation}
{\boldsymbol{d}^{(r)}} = \boldsymbol{d}^{(r-1)} + {\boldsymbol{e}^{*{(r)}}}, \vspace{-0.20cm}
\end{equation}
where $\boldsymbol{e}^{*{(r)}}$ is the optimal perturbation vector at step $r$ which is obtained based on $\boldsymbol{d}^{(r-1)}$. 

Note that, 
at each step, the objective function in (\ref{d3}) decreases. Since the objective function
is monotonically decreasing and bounded from below, the solution converges after several updates. We note that due to the approximation used in (\ref{approx}), the solution may not be a global optimal. Nevertheless, as we can see from Theorem \ref{Theor1}, it is analytically tractable and, hence, it has a low computational complexity.  Here, we use $\boldsymbol{d}^*$ to represent the vector of nearly-optimal distances of drones from the original of the array. Next, we use $\boldsymbol{d}^*$ to determine the optimal 3D locations of the drones that result in a maximum array directivity towards a given ground user.\vspace{-0.3cm}

\subsection{Optimal Locations of Drones}
Here, following from Subsection \ref{spaceP}, we derive the optimal 3D positions of drones that yields a maximum directivity of the drone-based antenna array.  Let $(x^\textrm{u} _{i},y^\textrm{u} _{i},z^\textrm{u}_i)$ and $(x_o,y_o,z_o)$ be, respectively, the 3D locations of user $i\in \mathcal{L}$ and the origin of the antenna array. 

Without loss of generality, we translate the origin of our coordinate system to the origin of the antenna array. In other words, we assume that the arrays' center is the origin of our translated coordinate system. In this case, the 3D location of user $i$ will be $(x^\textrm{u} _{i}-x_o,y^\textrm{u}_{i}-y_o,z^\textrm{u}_i-z_o)$. Subsequently, the polar and azimuthal angles of user $i$ in the spherical coordinate (with an origin of antenna array) are given by:\vspace{-0.15cm}
\begin{align} 
&{\theta _i} = {\cos ^{ - 1}}\left[ {\frac{{z_i^u - {z_o}}}{{\sqrt {{{(x_i^u - {x_o})}^2} + {{(y_i^u - {y_o})}^2} + {{(z_i^u - {z_o})}^2}} }}} \right], \label{tetai}\\
&{\phi _i} = {\sin ^{ - 1}}\left[ {\frac{{y_i^u - {y_o}}}{{\sqrt {{{(x_i^u - {x_o})}^2} + {{(y_i^u - {y_o})}^2}} }}} \right].
\end{align}

 Now, the optimal locations of the drones in the antenna array is given as follows.

\begin{theorem} \label{Theorem2}
	\normalfont
	The optimal locations of the drones for maximizing the directivity of the drone-based antenna array towards a given ground user will be:
		\begin{small}
		\begin{align} 
		&	\begin{pmatrix}x^*_m, y^*_m, z^*_m \end{pmatrix}^T= \nonumber\\
		& \hspace{-0.15cm}	\begin{cases}
		\boldsymbol{R}_\textrm{rot}\, \begin{pmatrix} d^*_m\sin{\alpha_o}\cos{\gamma_o}, d^*_m\sin{\alpha_o}\sin{\beta_o}, d^*_m\cos{\alpha_o}\end{pmatrix}^T,& \hspace{-0.4cm} \, m \le M/2, \\
		-\boldsymbol{R}_\textrm{rot}\, \begin{pmatrix} d^*_m\sin{\alpha_o}\cos{\gamma_o}, d^*_m\sin{\alpha_o}\sin{\gamma_o}, d^*_m\cos{\alpha_o}\end{pmatrix}^T\hspace{-0.1cm},&\hspace{-0.3cm} \, m> M/2,
		\end{cases} 
		\end{align}
	\end{small}

\noindent   	where $\alpha_o$ and $\gamma_o$ are the initial polar and azimuthal angles of drone $m\le M/2$ with respect to the array's center. $\boldsymbol{R}_\textrm{rot}$ is the rotation matrix for updating drones' positions, given by:
\begin{small}
	\begin{align}	
	&\boldsymbol{R}_\textrm{rot}= \nonumber \\  &\begin{pmatrix}
	{a_x^2(1 - \delta ) + \delta }&{{a_x}{a_y}(1 - \delta ) - \lambda {a_z}}&{{a_x}{a_z}(1 - \delta ) + \lambda {a_y}}\\
	{{a_x}{a_y}(1 - \delta ) + \lambda {a_z}}&{a_y^2(1 - \delta ) + \delta }&{{a_y}{a_z}(1 - \delta ) - \lambda {a_x}}\\
	{{a_x}{a_z}(1 - \delta ) - \lambda {a_y}}&{{a_y}{a_z}(1 - \delta ) + \lambda {a_x}}&{a_z^2(1 - \delta ) + \delta }
	\end{pmatrix},
	\end{align} 
\end{small} 
	where $\delta  = \left\| \boldsymbol{{{q_i} \cdot {q_{\max }}}} \right\|$, $\lambda  = \sqrt {1 - {\delta ^2}}$, $\boldsymbol{q_i}=\begin{pmatrix}
	{\sin {\theta_i} \cos {\phi_i}}\\
	{\sin {\theta_i} \sin {\phi_i}}\\
	{\cos {\theta_i}}
	\end{pmatrix}$, $\boldsymbol{q_{\textrm{max}}}= \begin{pmatrix} \sin {\theta_\textrm{max}} \cos {\phi_\textrm{max}}\\ \sin {\theta_\textrm{max}} \sin {\phi_\textrm{max}}\\ \cos {\theta_\textrm{max}} \end{pmatrix}$. Moreover,  $a_x$, $a_y$, and $a_z$ are the elements of vector $\boldsymbol{a}={\begin{pmatrix}
		{{a_x}},
		{{a_y}},
		{{a_z}}
		\end{pmatrix}^T}= \boldsymbol{{{q_i} \times {q_{\max }}}}$.

\end{theorem}

\begin{proof}
	See Appendix $B$.
\end{proof}

	\begin{figure}[!t]
		\begin{center}
			\vspace{-0.1cm}
			\includegraphics[width=8.0 cm]{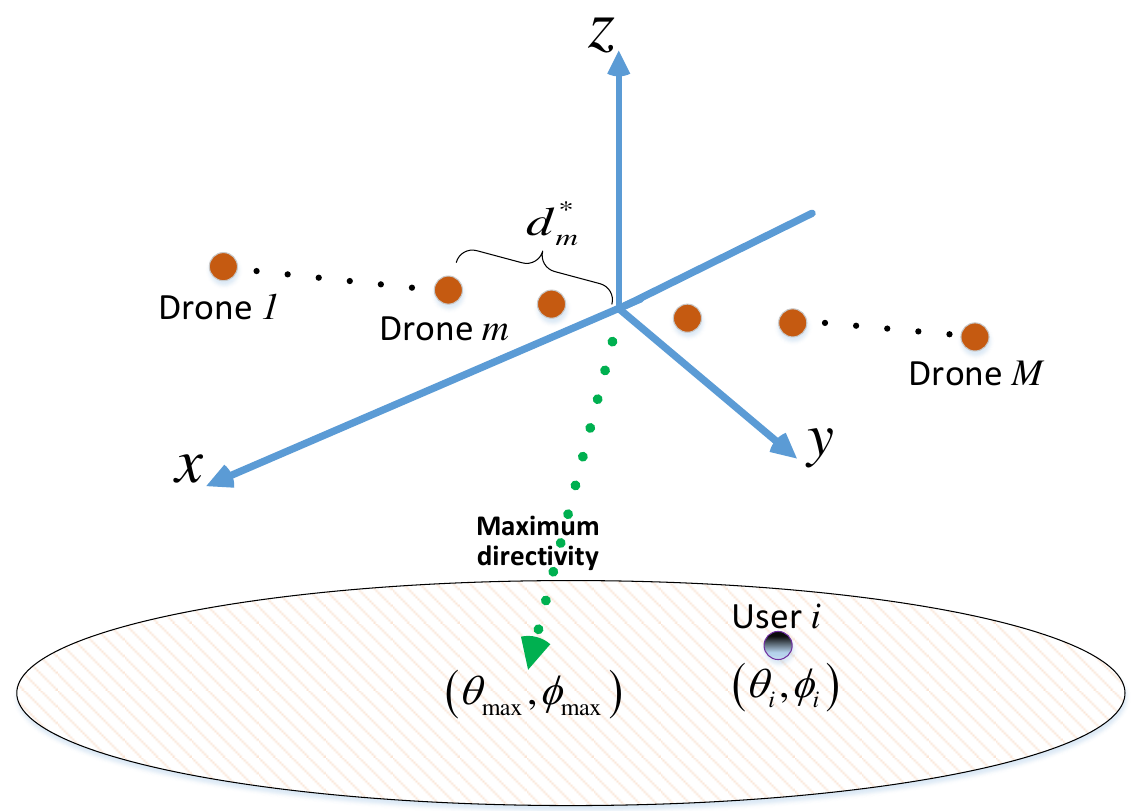}
			\vspace{-0.1cm}
			\caption{  \small Illustrative figure for Theorem 2. }\vspace{-.2cm}
			\label{Ther1Fig}
		\end{center}	
	\end{figure}

Using Theorem \ref{Theorem2}, we can find the optimal locations of the drones such that the directivity of the drone-based antenna array is maximized towards any given ground user. Moreover, this theorem can be used to dynamically update the drones' positions for beam steering while serving different ground users.  
 
 
Thus far, we have determined the optimal locations of the drones in the antenna array to maximize the directivity of the array towards any given ground user. Therefore, the data rate is maximized and, hence, the transmission time for serving the user is minimized. In Algorithm 1, we have summarized the key steps needed for optimizing the locations of drones with respect to the center of the array.

\setlength\textfloatsep{0.6\baselineskip plus 3pt minus 2pt}
\begin{algorithm} [t]
	\begin{small}
		\caption{Optimizing drones' locations for maximum array gain towards user $i$. }\label{Gradient}
		\begin{algorithmic}[1] 
			\State {\textbf{Inputs:} Locations of user $i$, $(x^\textrm{u} _{i},y^\textrm{u} _{i},z^\textrm{u}_i)$, and origin of array, $(x_o,y_o,z_o)$.}
			\State \textbf{Outputs:} Optimal drones' positions, $(x^*_{m,i},y^*_{m,i},z^*_{m,i})$, $\forall m \in \mathcal{M}$. 
			\State{Set initial values for distance between drones, $\boldsymbol{d}$.}
			\State {Find ${\boldsymbol{e}}^*$ by using  (\ref{e_opt})-(\ref{q}).} \label{stepe}
			\State {Update ${\boldsymbol{d}}$ based on (\ref{d_opt}).}\label{stepd}
			\State{Repeat steps (\ref{stepe}) and (\ref{stepd}) to find the optimal spacing vector $\boldsymbol{d}^*$.}
			\State { Use (\ref{tetai})-(\ref{R_rotation}) to determine $(x^*_m,y^*_m,z^*_m)$, $\forall m \in \mathcal{M}$. }

		\end{algorithmic} 
	\end{small} 
\end{algorithm}

Hence, using Algorithm 1, we can determine the optimal locations of the array's drones with respect to each ground user. To serve multiple users spread over a given geographical area, the drones must dynamically move between these determined optimal locations. This, in turn, yields a control time for drone movement that must be optimized. From (\ref{OPT1-1}), we can see that the service time decreases by reducing the control time. Therefore, next, using the determined drones' locations in Section III, we minimize the control time of the drones. 

\section{Time-Optimal Control of Drones } \label{TimeControlSec}
Here, our goal is to minimize the control time that the drones spend to move between the optimal locations which are determined in Section \ref{SecDirectivity}. \textcolor{black}{While moving the drone-based antenna array, we assume that the array rotates around its center in order to steer the beam and serve different users. Hence, the order of the drones (i.e., drones' indices) on the array does not change while moving the array. This approach significantly facilitates collision avoidance between the  drones as their paths do not intersect.}

In this section, we derive the optimal rotors' speeds for which the quadrotor drones can move and stabilize their positions within a minimum time. Moreover, we account for \emph{wind effects} while analyzing the drones' stability in the proposed drone-based antenna array system. \vspace{-0.3cm}


\subsection{Dynamic Model of a Quadrotor Drone}  
 
 Fig.\,\ref{Drones12} shows an illustrative example of a quadrotor drone. This drone has four rotors that can control the hovering and mobility of the drone. In particular, by adjusting the speed of these rotors, the drone can hover and move horizontally or vertically. Let $(x,y,z)$ be the 3D position of the drone. Also, we use $(\psi_\textrm{r},\psi_\textrm{p},\psi_\textrm{y})$ to represent the roll, pitch, and yaw angles that capture the orientation (i.e., attitude) of the drone. \textcolor{black}{Roll, pitch, and yaw are rotation angles defined with respect to the body frame. Here, the origin of the body frame coordinate system (represented by the $x_\textrm{b}$-$y_\textrm{b}$-$z_\textrm{b}$ axes)  is at the center of the drone, $x_\textrm{b}$ is along the arm between rotors 1 and 3,  $y_\textrm{b}$ is along the arm between rotors 2 and 4, and $z_\textrm{b}$ is in the direction of the cross product of the $x_\textrm{b}$ and $y_\textrm{b}$ axes. In this case, roll, pitch, and yaw, are rotations along  $x_\textrm{b}$,  $y_\textrm{b}$, and  $z_\textrm{b}$.}

  The speed of rotor $i\in\{1,2,3,4\}$ is given by $v_i$. For a quadrotor drone, the total thrust and torques that lead to the roll, pitch, and yaw movements are related to the rotors' speeds by \cite{BOOKControl}:\vspace{-0.15cm}
\begin{equation} \label{Velocit}
\left( {\begin{array}{*{20}{c}}
	{{T_\textrm{tot}}}\\
	{{\kappa _1}}\\
	{{\kappa _2}}\\
	{{\kappa _3}}
	\end{array}} \right) = \left( {\begin{array}{*{20}{c}}
	{{\rho _1}}&{{\rho _1}}&{{\rho _1}}&{{\rho _1}}\\
	0&{-l{\rho _1}}&0&{  l{\rho _1}}\\
	{-l{\rho _1}}&0&{  l{\rho _1}}&0\\
	{{-\rho _2}}&{  {\rho _2}}&{{-\rho _2}}&{  {\rho _2}}
	\end{array}} \right)\left( {\begin{array}{*{20}{c}}
	{v_1^2}\\
	{v_2^2}\\
	{v_3^2}\\
	{v_4^2}
	\end{array}} \right), 
 \end{equation}
 where $T_\textrm{tot}$ is the total thrust generated by the rotors. \textcolor{black}{The direction of the thrust is upward perpendicular to the rotors' plane, as we can see from Fig. \ref{Drones12}.} $\kappa_1$, $\kappa_2$, and $\kappa_3$ are the torques for roll, pitch and yaw movements. $\rho _1$ and $\rho _2$ are lift and torque coefficients, and $l$ is the distance from each rotor to the center of the drone.

\begin{figure}[t!]
	\centering
\hspace{-0.5cm}	\begin{subfigure}[t]{0.3\textwidth}
		\vspace{-0.01cm}
	\hspace{-0.5cm}		\includegraphics[width=4.6 cm]{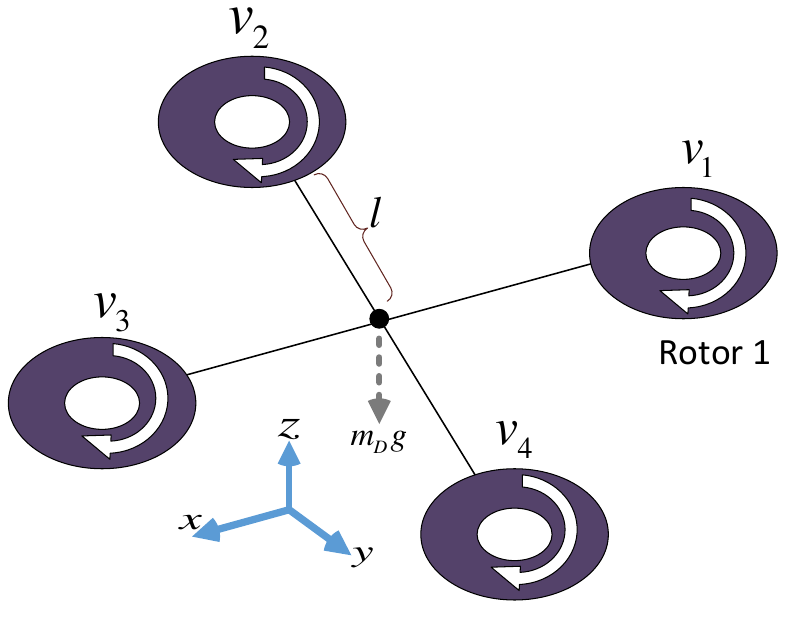}
		\vspace{-0.3cm}
		\label{Drone}
	\end{subfigure}%
	~ ~ ~ \hspace{-1.4cm}
	\begin{subfigure}[t]{0.3\textwidth}
		\vspace{-0.01cm}
\hspace{-0.5cm}		\includegraphics[width=4.6 cm]{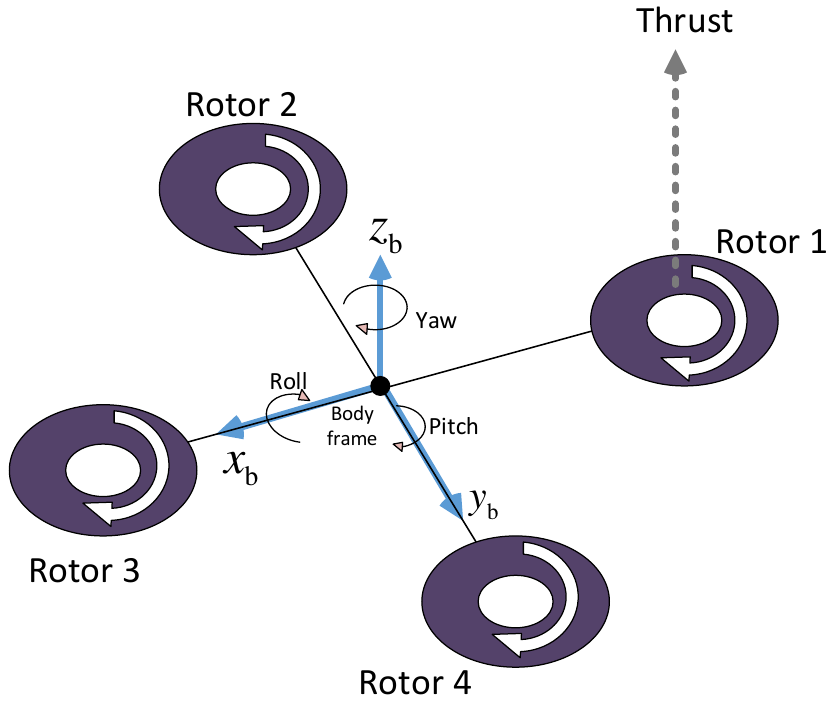}
	\vspace{-0.1cm}
	\end{subfigure}
	\caption{\textcolor{black}{\small A quadrotor drone.}} 	\label{Drones12}
\end{figure}

%
%
%

Now, we write the dynamic equations of a quadrotor drone in presence of an external wind force as follows\footnote{Note that, here, drag coefficients are assumed to be negligible.}\vspace{-0.2cm}:
\begin{align}
&\ddot x = \left( {\cos {\psi _r}\sin {\psi _p}\cos {\psi _y} + \sin {\psi _r}\sin {\psi _y}} \right)\frac{{{T_\textrm{tot}}}}{m_D}+\frac{F^\textrm{W}_x}{m_D}, \label{ax}\\
&\ddot y = \left( {\cos {\psi _r}\sin {\psi _p}\sin {\psi _y} + \sin {\psi _r}\cos {\psi _y}} \right)\frac{{{T_\textrm{tot}}}}{m_D}+\frac{F^\textrm{W}_y}{m_D},\label{ay}\\
&\ddot z = \left( {\cos {\psi _r}\cos {\psi _p}} \right)\frac{{{T_\textrm{tot}}}}{m_D} - g+\frac{F^\textrm{W}_z}{m_D},\label{az}\\
& \ddot \psi_\textrm{r}=\frac{\kappa_2}{I_x}, \label{pitc}\\
& \ddot \psi_\textrm{p}=\frac{\kappa_1}{I_y},\label{rol}\\
& \ddot \psi_\textrm{y}=\frac{\kappa_3}{I_z}, 
\end{align}
where $m_{D}$ is the mass of the drone, and $g$ is the gravity acceleration. $F^\textrm{W}_x$, $F^\textrm{W}_y$, and $F^\textrm{W}_z$ are the wind forces in positive $x$, $y$, and $z$ directions. Also, $I_x$, $I_y$, $I_z$ are constant values which represent the moments of inertia along $x$, $y$, and $z$ directions.
\textcolor{black}{From (25), we can see that the total thrust, $T_\textrm{tot}$ is directly related to the rotor speed.  Also, (26)-(28) capture the relationship between $T_\textrm{tot}$ and the drone's acceleration. Hence,
	using (25)-(28), we can find the drone's accelerations in the $x$, $y$, and $z$ directions. These accelerations  are directly related to position and velocity of the drone using classical kinematic equations \cite{hurtado2012kinematic}.}

Given the dynamic model of the drone, we aim to find the optimal speeds of the rotors such that the drone moves from an initial location $(x_I,y_I,z_I)$ to a new location $(x_D,y_D,z_D)$ within a minimum time duration. Under such optimal control inputs (i.e., rotors' speed), the time needed for each UAV to update its location based on the users' locations will be minimized. Note that the drone must be stationary at its new location and it does not move in $x$, $y$, or $z$ direction.
Let $(x(t), y(t),z(t))$ and $\left( {{\psi _\textrm{r}}(t),{\psi _\textrm{p}}(t),{\psi _\textrm{y}}(t)} \right)$ be the 3D location and orientation of the drone at time $t \in \left[ {0,{T_{I,D}}} \right]$, with $T_{I,D}$ being the total control time for moving from location $I$ to location $D$. Now, we can formulate our time-optimal control problem for a drone, moving from location $I$ to location $D$, as follows:\vspace{-0.1cm} 
\begin{align}
 &\mathop {{\mathop{\rm minimize}\nolimits} }\limits_{[{v_1}(t),{v_2}(t),{v_3}(t),{v_4}(t)]} {T_{I,D}}, \label{OptCont}\\
 \textrm{st.}\,\,\,& \textcolor{black}{|v_w(t)|\le v_\textrm{max},\,\,\, \forall w\in\{1,...,4\},} \label{Vmax}\\
&\left( {x(0),y(0),z(0)} \right) = \left( {{x_I},{y_I},{z_I}} \right),\label{Loc1}\\
 &\left( {x({T_{I,D}}),y({T_{I,D}}),z({T_{I,D}})} \right) = \left( {{x_D},{y_D},{z_D}} \right), \label{Loc2}\\
& \left( {\dot x(T_{I,D}),\dot y(T_{I,D}),\dot z(T_{I,D})} \right) = \left( {0,0,0} \right), \label{Veloc}
 \end{align}          
where $[{v_1}(t),{v_2}(t),{v_3}(t),{v_4}(t)]$ represents the rotors' speeds at time $t$. In (\ref{Vmax}), $v_\textrm{max}$ is the maximum possible speed of each rotor. Constraints (\ref{Loc1}) and (\ref{Loc2}) show the initial and final location of the drone (which are determined based on Algorithm 1), (\ref{Veloc}) indicates that the drone will be stationary at its final location. Here,  we assume  $\left( {{\psi _r}(0),{\psi _p}(0),{\psi _y}(0)} \right) = \left( {0,0,0} \right)$.

\textcolor{black}{In  (\ref{OptCont}), the goal is to minimize the control time that a drone needs in order to move between two locations, along a linear path. The objective function is the control time, and the optimization variables are the speeds of rotors. In (\ref{OPT1-1}), $T_{I,D}$ is the control time that a quadrotor drone spends to move from location $I$ to location $D$, the optimization variables are the speeds of rotors at time $t$, which are denoted by $v_1(t)$,  $v_2(t)$,  $v_3(t)$, and  $v_4(t)$.}   Note that in (\ref{OPT1-1}), the control time for serving user $i$, $T_i^\textrm{crl}$, is equal to the maximum control time among the drones that update their positions according to the user. 
 
Our problem in (\ref{OptCont}) is difficult to solve due to its non-linear nature, and coupled relation of the dynamic system parameters as well as the infinite number of optimization variables given the continuous time interval $[0,T_{I,D}]$. Consequently, in general, the exact analytical solution to such nonlinear time-optimal control problem may not be explicitly derived as pointed out in \cite{computationally} and \cite{TimeOptimalDrone}.  
 To provide a tractable solution to our time-optimal control problem in (\ref{OptCont}), we decompose the movements and orientation changes of drones. In particular, we minimize the time durations needed for orientation adjustment and displacement of the drone, separately. While this approach yields a suboptimal solution, it can be used to derive a closed-form expression for the control inputs (i.e., rotors' speeds) in (\ref{OptCont}) and, thus,  it is remarkably easy to implement. In addition, the computational time, which is a key constraint in wireless drone systems, can be6t reduced.

 Now, we aim to derive the optimal speeds of rotors for which the drone can update its locations within a minimum time duration. To this end, we first present the following lemma from control theory \cite{IntroControl} which will be then used to derive the optimal rotors' speeds. 
 
 \begin{lemma}\textbf{\normalfont (From \cite{IntroControl}):} \label{Bangbang}
 	\normalfont
 	Consider the state space equations for an object within time duration $[0,T]$:  \vspace{-0.1cm}
 	\begin{align}
 	&\dot {\boldsymbol{x}}(t) =\boldsymbol{ A}\boldsymbol{x}(t) + \boldsymbol{b}u(t),\,\, u_\textrm{min}\le u(t)\le u_\textrm{max}, \label{L1}\\
 	&	\boldsymbol{x}(0)=\boldsymbol{x}_1,\label{L2}\\
 	&	\boldsymbol{x}(T)=\boldsymbol{x}_2,\label{L3} 
 	\end{align}
 \noindent where $\boldsymbol{x}(t)\in\mathds{R}^{N_s}$ is the state vector of the object at time $t\in [0,T]$, $N_s$ is the number of state's elements.  $u(t)$ is a bounded control input with $u_\textrm{max}$ and $u_\textrm{min}$ being its maximum and minimum values. $\boldsymbol{A}\in\mathds{R}^{N_s\times N_s}$ and $\boldsymbol{b}\in\mathds{R}^{N_s}$  are given constant matrices.  $\boldsymbol{x}_1$ and $\boldsymbol{x}_2$ are the initial and final state of the object. Then, the optimal control input that leads to a minimum state update time $T^*$ is given by \cite{IntroControl}: \vspace{-0.1cm}
 	\begin{equation} \label{umax}
 	{u^*}(t) = 
 	\begin{cases}
 	{u_\textrm{max }},\,\,\,&t \le \tau,\\
 	{u_\textrm{min }},\,\,\, &t> \tau,
 	\end{cases}
 	\end{equation}
 	where $\tau$ is called the switching time at which the control input changes. In this case, the control time decreases by increasing $u_\textrm{max}$ and/or decreasing $u_\textrm{min}$.  
 	
 \end{lemma}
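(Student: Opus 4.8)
The plan is to prove this as a textbook instance of Pontryagin's Maximum (Minimum) Principle applied to a minimum-time problem for a linear, time-invariant system with a scalar, box-constrained input. First I would write the cost as $T=\int_0^{T} 1\,\mathrm{d}t$ and introduce the control Hamiltonian $H(\boldsymbol{x},\boldsymbol{p},u)=1+\boldsymbol{p}^{T}\boldsymbol{A}\boldsymbol{x}+\boldsymbol{p}^{T}\boldsymbol{b}\,u$, where $\boldsymbol{p}(t)\in\mathds{R}^{N_s}$ is the costate. The adjoint equation is $\dot{\boldsymbol{p}}(t)=-\partial H/\partial\boldsymbol{x}=-\boldsymbol{A}^{T}\boldsymbol{p}(t)$, which integrates explicitly to $\boldsymbol{p}(t)=e^{-\boldsymbol{A}^{T}t}\boldsymbol{p}(0)$ for a nonzero costate that is fixed, together with the switching and terminal instants, by the boundary conditions \eqref{L2}--\eqref{L3} and by the free-terminal-time (transversality) condition $H\equiv 0$ along the optimal trajectory.

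The central observation is that $H$ is \emph{affine} in $u$, so the pointwise minimization $u^{*}(t)=\arg\min_{u_{\min}\le u\le u_{\max}}\,\boldsymbol{p}^{T}\boldsymbol{b}\,u$ is always attained at an endpoint of the feasible interval. Defining the \emph{switching function} $s(t):=\boldsymbol{b}^{T}\boldsymbol{p}(t)=\boldsymbol{b}^{T}e^{-\boldsymbol{A}^{T}t}\boldsymbol{p}(0)$, the optimal input obeys $u^{*}(t)=u_{\max}$ wherever $s(t)<0$ and $u^{*}(t)=u_{\min}$ wherever $s(t)>0$, which is exactly the bang-bang form asserted in \eqref{umax}. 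I would then establish \emph{normality}: since $s(t)$ is a real-analytic combination of the exponential modes of $\boldsymbol{A}$, it cannot vanish on a set of positive measure unless $s\equiv 0$, and $s\equiv 0$ is ruled out by controllability of the pair $(\boldsymbol{A},\boldsymbol{b})$ coming from the drone dynamics of Section~\ref{TimeControlSec}. Hence $u^{*}$ is genuinely bang-bang almost everywhere, and the only issue left is counting the sign changes of $s$.

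The main obstacle is precisely this last step: upgrading the generic bound of $N_s-1$ switches to the \emph{single} switch at time $\tau$ claimed in the statement. To do so I would exploit the concrete structure of the decoupled translational motion: after the per-axis decomposition described in Section~\ref{TimeControlSec}, each one-dimensional displacement subproblem is a double integrator ($\ddot{x}=u$), for which $\boldsymbol{A}$ is nilpotent and $s(t)$ collapses to an affine function of $t$; an affine function changes sign at most once, giving a unique switching instant $\tau$ and thus \eqref{umax}. Finally, I would pin down $\tau$ and the optimal time $T^{*}$ by inserting the two-phase max-then-min input into the state equation \eqref{L1} and imposing the terminal constraint \eqref{L3}, producing two algebraic equations in $(\tau,T^{*})$. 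The closing monotonicity remark—that $T^{*}$ shrinks as $u_{\max}$ grows or $u_{\min}$ decreases—follows from a comparison argument: enlarging the admissible input interval can only enlarge the reachable set at every horizon, so the minimal time to reach $\boldsymbol{x}_2$ cannot increase.
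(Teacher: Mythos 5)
The paper itself offers no proof of this lemma: it is imported verbatim from the cited control-theory reference, so there is nothing internal to compare your argument against line by line. On its own merits, your proposal is the standard and correct derivation: Pontryagin's principle for a minimum-time problem with Hamiltonian affine in $u$, pointwise minimization over the box $[u_{\min},u_{\max}]$, the switching function $s(t)=\boldsymbol{b}^{T}e^{-\boldsymbol{A}^{T}t}\boldsymbol{p}(0)$, and normality via controllability of $(\boldsymbol{A},\boldsymbol{b})$. Your most valuable contribution is the observation that the single-switch form in \eqref{umax} is \emph{not} a consequence of the general hypotheses \eqref{L1}--\eqref{L3}: for an arbitrary constant pair $(\boldsymbol{A},\boldsymbol{b})$ with $N_s>2$ the optimal control may switch up to $N_s-1$ times (or more when $\boldsymbol{A}$ has complex eigenvalues), so the lemma as stated is stronger than what the cited theory delivers. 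You repair this by restricting to the double-integrator structure, where $\boldsymbol{A}$ is nilpotent and $s(t)$ is affine in $t$; this is exactly the form in which the lemma is actually invoked in the proof of Theorem 3 (the state equation for $\boldsymbol{g}(t)$ there is $\ddot{s}=a_D$), so your restriction loses nothing for the paper's purposes while making the single-$\tau$ claim actually provable. Two minor caveats: the orientation of the profile ($u_{\max}$ first, then $u_{\min}$, rather than the reverse) is determined by the boundary data and is not universal, and your closing reachable-set monotonicity argument for the dependence of $T^{*}$ on $u_{\max}$ and $u_{\min}$ is correct and is the cleanest way to justify the paper's final remark.
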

 Lemma \ref{Bangbang} provides the solution to the time-optimal control problem for a dynamic system which is characterized by (\ref{L1})-(\ref{L3}). In particular, the optimal control solution given in (\ref{umax}) is refereed to as \emph{bang-bang} solution \cite{IntroControl}. In this case, the optimal control input is always at its extreme value (i.e. maximum or minimum). Next, we provide a new lemma (Lemma 2) which will be used along with Lemma 1 to solve (\ref{OptCont}). \vspace{-0.1cm}

 \begin{lemma} \label{Lem1}
 		\normalfont
Consider a drone that needs to move towards a given location $D$ (as shown in Fig.\ref{Forces}), with a coordinate ${\boldsymbol{P}_D} = ({x_D},{y_D},{z_D})$, in presence of an external force ${\boldsymbol{F}_\textrm{ex}} = ({F_{\textrm{ex},x}},{F_{\textrm{ex},y}},{F_{\textrm{ex},z}})$. The drone's orientation that leads to a movement with the maximum acceleration towards $\boldsymbol{P}_D$ is:\vspace{-0.1cm} 
\begin{align}
&\psi _\textrm{p}^D = {\cos ^{ - 1}}\left[ {\frac{{A\cos {\theta _D} - |{\boldsymbol{F}_\textrm{ex}}|\cos {\theta_\textrm{ex}}}}{{F}}} \right],\\
&\psi _\textrm{r}^D = {\tan ^{ - 1}}\left( {\tan \beta \times \sin \psi _p^D} \right),\\
&\psi _\textrm{y}^D = 0,
\end{align}
where\\
$ \hspace{-0.4cm}A \hspace{-0.1cm} =  \hspace{-0.1cm}{\left[ \hspace{-0.05cm} {F^2 + |{\boldsymbol{F}_\textrm{ex}}{|^2} + 2F|{\boldsymbol{F}_\textrm{ex}}|\cos \left( {\gamma  + {{\sin }^{ - 1}}\hspace{-0.05cm}\left( {\frac{{|{\boldsymbol{F}_\textrm{ex}}|}}{{F}}\sin \gamma } \right)} \right)} \right]^{1/2}}\hspace{-0.05cm}$, $\beta={\phi _D} - {\sin ^{ - 1}}\left[ {\frac{{|{\boldsymbol{F}_\textrm{ex}}|\sin {\theta _\textrm{ex}}\sin \left( {{\phi _D} - {\phi _\textrm{ex}}} \right)}}{{F\sin \psi _p^D}}} \right]$, $\gamma  = {\cos ^{ - 1}}\left( {\frac{{{\boldsymbol{F}_\textrm{ex}}.{\boldsymbol{P}_D}}}{{|{\boldsymbol{F}_\textrm{ex}}||{\boldsymbol{P}_D}|}}} \right)$, and $F$ is the magnitude of the maximum force of the drone. $|{\boldsymbol{F}_\textrm{ex}}|$ represents the magnitude of vector ${\boldsymbol{F}_\textrm{ex}}$, ${\theta _\textrm{ex}} = {\cos ^{ - 1}}\left( {\frac{{{F_{\textrm{ex},z}}}}{{|{\boldsymbol{F}_\textrm{ex}}|}}} \right)$, ${\phi _\textrm{ex}} = {\tan ^{ - 1}}\left( {\frac{{{F_{\textrm{ex},y}}}}{{{F_{\textrm{ex},x}}}}} \right)$, ${\phi _D} = {\tan ^{ - 1}}\left( {\frac{{{y_D}}}{{{x_D}}}} \right)$, and ${\theta _D} = {\cos ^{ - 1}}\left( {\frac{{{z_D}}}{{|{\boldsymbol{P}_D}|}}} \right)$.  
 \end{lemma}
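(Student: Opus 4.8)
The plan is to treat the displacement phase geometrically. Since we have decoupled orientation changes from translation and the drone is to travel along the straight segment joining its current position to $D$, the resultant of all forces acting on it must be collinear with $\boldsymbol{P}_D$, and the acceleration toward $D$ is simply this resultant divided by $m_D$. Minimizing the traversal time therefore amounts to maximizing the magnitude of the net force that can be aligned with $\hat{\boldsymbol{P}}_D=\boldsymbol{P}_D/|\boldsymbol{P}_D|$. The drone controls only its thrust, whose magnitude is bounded by $F$ (all four rotors at $v_\textrm{max}$ in (\ref{Velocit})) and whose direction is set by the attitude angles through (\ref{ax})--(\ref{az}), while every non-thrust force (gravity together with the wind) is aggregated into the fixed vector $\boldsymbol{F}_\textrm{ex}$. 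Hence the first step is to argue that the optimum uses full thrust and to write the alignment condition $\boldsymbol{F}_\textrm{thrust}+\boldsymbol{F}_\textrm{ex}=A\hat{\boldsymbol{P}}_D$ with $|\boldsymbol{F}_\textrm{thrust}|=F$, where $A>0$ is the net force magnitude to be maximized.

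First I would solve the resulting force triangle for $A$. Taking the squared norm of $\boldsymbol{F}_\textrm{thrust}=A\hat{\boldsymbol{P}}_D-\boldsymbol{F}_\textrm{ex}$ and using $\boldsymbol{F}_\textrm{ex}\cdot\hat{\boldsymbol{P}}_D=|\boldsymbol{F}_\textrm{ex}|\cos\gamma$, where $\gamma$ is the angle between $\boldsymbol{F}_\textrm{ex}$ and $\boldsymbol{P}_D$, yields the quadratic $A^2-2A|\boldsymbol{F}_\textrm{ex}|\cos\gamma+|\boldsymbol{F}_\textrm{ex}|^2-F^2=0$. The positive root $A=|\boldsymbol{F}_\textrm{ex}|\cos\gamma+\sqrt{F^2-|\boldsymbol{F}_\textrm{ex}|^2\sin^2\gamma}$ is the physically admissible one, and a short manipulation using the auxiliary angle $\delta=\sin^{-1}\!\big(\tfrac{|\boldsymbol{F}_\textrm{ex}|}{F}\sin\gamma\big)$ (the law-of-sines angle of the same triangle) rewrites it in the stated closed form for $A$.

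Next I would recover the attitude from the now-known vector $\boldsymbol{F}_\textrm{thrust}=A\hat{\boldsymbol{P}}_D-\boldsymbol{F}_\textrm{ex}$ by inverting the orientation-to-thrust-direction map of (\ref{ax})--(\ref{az}). Because a yaw rotation about the vertical leaves the set of achievable thrust directions unchanged and the translational dynamics do not constrain it, I would set $\psi_\textrm{y}^D=0$ without loss of generality, consistent with the initial condition $\psi_\textrm{y}(0)=0$. Matching the vertical component of the unit thrust direction, i.e.\ identifying the pitch with the polar tilt of the thrust away from the vertical, gives $\psi_\textrm{p}^D=\cos^{-1}[(A\cos\theta_D-|\boldsymbol{F}_\textrm{ex}|\cos\theta_\textrm{ex})/F]$, where $\theta_D$ and $\theta_\textrm{ex}$ are the polar angles of $\boldsymbol{P}_D$ and $\boldsymbol{F}_\textrm{ex}$. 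The horizontal components then form a second, planar force triangle whose solution by the law of sines produces the azimuth $\beta=\phi_D-\sin^{-1}[|\boldsymbol{F}_\textrm{ex}|\sin\theta_\textrm{ex}\sin(\phi_D-\phi_\textrm{ex})/(F\sin\psi_\textrm{p}^D)]$, and matching the ratio of the $y$- to $x$-components of the thrust direction in (\ref{ax})--(\ref{ay}) gives the roll $\psi_\textrm{r}^D=\tan^{-1}(\tan\beta\,\sin\psi_\textrm{p}^D)$.

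The main obstacle I anticipate is this last step: inverting the nonlinear, trigonometrically coupled orientation-to-thrust map cleanly enough to obtain closed-form pitch and roll, which requires carefully separating the vertical tilt from the horizontal azimuth (hence the two-triangle decomposition) and making the correct branch choices --- the positive root for $A$, the principal values of the inverse trigonometric functions, and the admissible sign of the horizontal force triangle. Establishing the $A$-quadratic and verifying that its positive root coincides with the law-of-sines form is routine once $\gamma$ and $\delta$ are identified, so the analytical weight of the argument rests on the spherical-to-Euler conversion and on the justification of $\psi_\textrm{y}^D=0$.
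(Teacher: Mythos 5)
Your proposal is correct and follows essentially the same route as the paper's proof: full thrust with the resultant $\boldsymbol{F}+\boldsymbol{F}_\textrm{ex}$ aligned with $\boldsymbol{P}_D$, the force triangle (law of sines/cosines) to obtain $A$, projections onto the $z$-axis and the horizontal plane to get the tilt $\alpha=\psi_\textrm{p}^D$ and azimuth $\beta$, and the body-to-earth frame relation for $\psi_\textrm{r}^D=\tan^{-1}(\tan\beta\sin\psi_\textrm{p}^D)$ with $\psi_\textrm{y}^D=0$. Your derivation of $A$ as the positive root of the quadratic $A^2-2A|\boldsymbol{F}_\textrm{ex}|\cos\gamma+|\boldsymbol{F}_\textrm{ex}|^2-F^2=0$ is an equivalent (arguably cleaner) restatement of the paper's law-of-cosines expression.
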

 \begin{proof}
 	See Appendix $C$.
 \end{proof}

 	\begin{figure}[!t]
 		\begin{center}
 			\vspace{-0.1cm}
 			\includegraphics[width=8.5 cm]{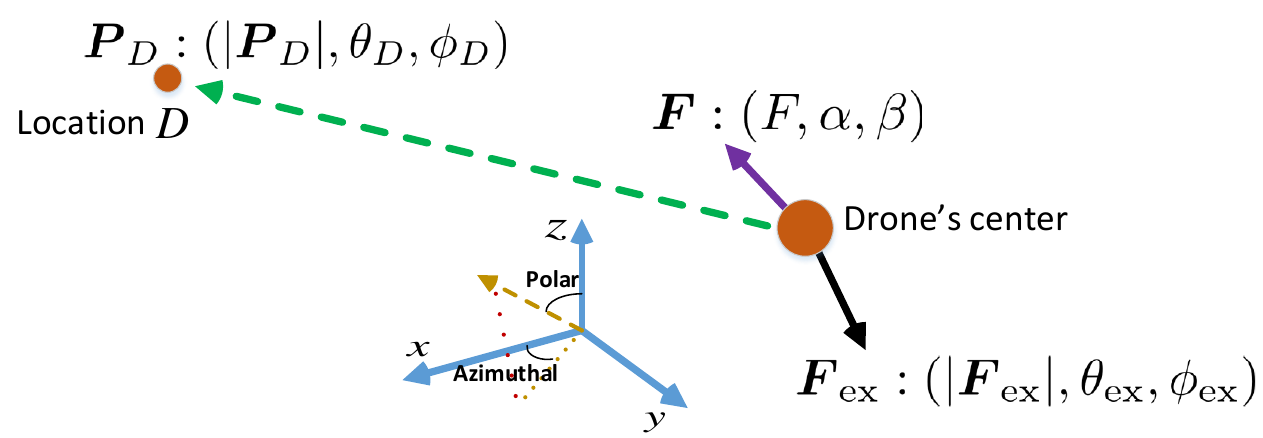}
 			\vspace{-0.1cm}
 			\caption{ \small Drone's movement in presence of an external force.}
 			\label{Forces}
 		\end{center}	\vspace{-0.1cm}
 	\end{figure} 
 	
Lemma \ref{Lem1}, can be used to determine the optimal orientation of the drone that enables it to move towards any given location in presence of external forces. Next, using Lemmas 1 and 2, we derive the speed of each drone's rotor for which the control time is minimized. In this case, we find the rotors' speeds at several pre-defined \emph{stages} in which the drone updates its position or orientation.\vspace{-0.1cm}  

\begin{theorem} \label{Thoerem3}
\normalfont
The optimal speeds of rotors with which a drone can move from location $(0,0,0)$, and $(0,0,0)$ orientation, to location $(x_D,y_D,z_D)$ within a minimum control time are given by: \vspace{-0.1cm}
\begin{equation} \label{Stage1}
\textrm{Stage 1: }\hspace{-0.18cm}
\begin{cases}
{v_2} = 0,{v_1} = {v_3} = \frac{1}{\sqrt{2}}{v_\textrm{max}},{v_4} = {v_{\max }}, &\hspace{-0.25cm}\textrm{if}\,\, 0 < t \le {\tau _1},\\
{v_4} = 0,{v_1} = {v_3} = \frac{1}{\sqrt{2}}{v_\textrm{max}},{v_2} = {v_{\max }} &\hspace{-0.25cm}\textrm{if}\,\, {\tau _1} < t \le {\tau _2},\\
{v_1} = 0,{v_2} = {v_4} = \frac{1}{\sqrt{2}}{v_\textrm{max}},{v_3} = {v_{\max }}, &\hspace{-0.25cm}\textrm{if}\,\, {\tau _2} < t \le {\tau _3},\\
{v_3} = 0,{v_2} = {v_4} = \frac{1}{\sqrt{2}}{v_\textrm{max}},{v_1} = {v_{\max }}, &\hspace{-0.25cm}\textrm{if}\,\, {\tau _3} < t \le {\tau _4}.
\end{cases}
\end{equation}
\begin{equation}
\textrm{Stage 2:\,\, } {v_1} = {v_2} = {v_3} = {v_4} = {v_{\max }},\,\,\ \textrm{if}\,\,\,{\tau _4} < t \le {\tau _5}.\vspace{-0.5cm}
\end{equation}

\begin{equation}
\textrm{Stage 3: }\hspace{-0.18cm}
\begin{cases}
{v_2} = 0,{v_1} = {v_3} = \frac{1}{\sqrt{2}}{v_\textrm{max}},{v_4} = {v_{\max }}, &\hspace{-0.25cm}\textrm{if}\,\, {\tau _5} < t \le {\tau _6},\\
{v_4} = 0,{v_1} = {v_3} = \frac{1}{\sqrt{2}}{v_\textrm{max}},{v_2} = {v_{\max }}, &\hspace{-0.25cm}\textrm{if}\,\,{\tau _6} < t \le {\tau _7},\\
{v_1} = 0,{v_2} = {v_4} = {v_{\max }},{v_3} = {v_{\max }}, &\hspace{-0.25cm}\textrm{if}\,\,{\tau _7} < t \le {\tau _8},\\
{v_3} = 0,{v_2} = {v_4} = \frac{1}{\sqrt{2}}{v_\textrm{max}},{v_1} = {v_{\max }}, &\hspace{-0.25cm}\textrm{if}\,\,{\tau _8} < t \le {\tau _9}.
\end{cases}\vspace{-0.2cm}
\end{equation}

\begin{equation}
\textrm{Stage 4:\,\, } {v_1} = {v_2} = {v_3} = {v_4} = {v_{\max }},\,\,\textrm{if}\,\,\,{\tau _9} < t \le {\tau _{10}}.\vspace{-0.5cm}
\end{equation}

\begin{equation}
\textrm{Stage\,5: }\hspace{-0.2cm}
\begin{cases}
{v_2} = 0,{v_1} = {v_3} = \frac{1}{\sqrt{2}}{v_\textrm{max}},{v_4} = {v_{\max }},&\hspace{-0.25cm}\textrm{if}\,\,{\tau _{10}} < t \le {\tau _{11}},\\
{v_4} = 0,{v_1} = {v_3} = \frac{1}{\sqrt{2}}{v_\textrm{max}},{v_2} = {v_{\max }},&\hspace{-0.25cm}\textrm{if}\,\,{\tau _{11}} < t \le {\tau _{12}},\\
{v_1} = 0,{v_2} = {v_4} = \frac{1}{\sqrt{2}}{v_\textrm{max}},{v_3} = {v_{\max }},&\hspace{-0.25cm}\textrm{if}\,\,{\tau _{12}} < t \le {\tau _{13}},\\
{v_3} = 0,{v_2} = {v_4} = \frac{1}{\sqrt{2}}{v_\textrm{max}},{v_1} = {v_{\max }},&\hspace{-0.25cm}\textrm{if}\,\,{\tau _{13}} < t \le {\tau _{14}}.
\end{cases}\vspace{-0.2cm}
\end{equation}

\begin{equation} \label{Stage6}
\textrm{Stage 6:\,\, } {v_1} = {v_2} = {v_3} = {v_4} = {v_\textrm{F}},\,\,\textrm{if}\,\,\, t>{\tau _{14}}. 
\end{equation}
Also, the total control time of the drone can be given by:
\begin{align} \label{TotalCont}
	T_{I,D}=&\sqrt{2d_{D}\Big(\frac{m_D}{A_{s2}}-\frac{m_D}{A_{s4}}\Big)}\nonumber\\
	&+\frac{2}{{{v_{\max }}}}\Big[\sqrt {\frac{{\Delta {\psi _{\textrm{p},1}}{I_y}}}{{l{\rho _1}}}}+\sqrt {\frac{{\Delta {\psi _{\textrm{r},1}}{I_x}}}{{l{\rho _1}}}}+\sqrt {\frac{{\Delta {\psi _{\textrm{p},3}}{I_y}}}{{l{\rho _1}}}}\nonumber\\
	&+\sqrt {\frac{{\Delta {\psi _{\textrm{r},3}}{I_x}}}{{l{\rho _1}}}}+\sqrt {\frac{{\Delta {\psi _{\textrm{p},5}}{I_y}}}{{l{\rho _1}}}}+\sqrt {\frac{{\Delta {\psi _{\textrm{r},5}}{I_x}}}{{l{\rho _1}}}}\Big],
\end{align}
where  $v_{\max }$, $v_\textrm{in}$, and $v_\textrm{F}$ are, respectively, the maximum, the initial, and the final speeds of rotors. $m_D$ is the drone's mass, $\Delta\psi _{\textrm{r},i}$ and $\Delta\psi _{\textrm{p},i}$ are the roll and pitch changes in Stage $i$. $d_D$ is the distance between the initial and final locations of the drone. $\tau_1,...,\tau_{14}$ are the switching times at which the rotors' speeds changes. The values of switching times and $v_F$ are provided in the proof of this theorem. 
\end{theorem}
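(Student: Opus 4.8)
The plan is to exploit the decomposition announced just before the statement: the coupled six-dimensional control problem (\ref{OptCont}) is split into independent attitude-adjustment subproblems and a translation subproblem, and each is reduced to a scalar double integrator to which the bang-bang principle of Lemma \ref{Bangbang} applies. First I would observe that the attitude dynamics $\ddot\psi_\textrm{p}=\kappa_1/I_y$ and $\ddot\psi_\textrm{r}=\kappa_2/I_x$ are exactly of the form (\ref{L1}), with state $(\psi,\dot\psi)^{T}$, constant matrix $\boldsymbol{A}=\left[\begin{smallmatrix}0&1\\0&0\end{smallmatrix}\right]$, and the torque in the role of the bounded input $u$. Hence, for any prescribed angular increment with zero initial and final angular rate, Lemma \ref{Bangbang} yields a single-switch bang-bang law: apply the maximal available torque until the midpoint, then its negative.

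Next I would verify that the rotor allocations listed in each phase realise precisely these extremal, decoupled torques. Substituting the phase values into the mixing relation (\ref{Velocit}), I would check that the phase with $v_1=v_3=\frac{1}{\sqrt{2}}v_\textrm{max}$ makes the roll torque $\kappa_2=l\rho_1(v_3^2-v_1^2)$ vanish and the yaw torque $\kappa_3=\rho_2(-v_1^2+v_2^2-v_3^2+v_4^2)$ vanish, while $v_4=v_\textrm{max},\,v_2=0$ drives the pitch torque $\kappa_1=l\rho_1(v_4^2-v_2^2)$ to its maximum $l\rho_1 v_\textrm{max}^2$; the companion phase swaps $v_2\leftrightarrow v_4$ to obtain the opposite extreme, which is exactly the sign reversal demanded by (\ref{umax}). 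The factor $1/\sqrt{2}$ is precisely what keeps the two inactive channels and the yaw channel nulled, so the motion is a pure rotation about a single axis. With maximal torque $l\rho_1 v_\textrm{max}^2$, integrating the double integrator over the symmetric bang-bang gives a half-manoeuvre time $\sqrt{\Delta\psi_\textrm{p}I_y/(l\rho_1 v_\textrm{max}^2)}$, hence a rotation time $\frac{2}{v_\textrm{max}}\sqrt{\Delta\psi_\textrm{p}I_y/(l\rho_1)}$, and symmetrically $\frac{2}{v_\textrm{max}}\sqrt{\Delta\psi_\textrm{r}I_x/(l\rho_1)}$ for roll; summing the pitch and roll contributions over the three attitude stages (1, 3 and 5) reproduces the six square-root terms of (\ref{TotalCont}).

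For the translation I would invoke Lemma \ref{Lem1} to fix the target attitude $(\psi_\textrm{r}^D,\psi_\textrm{p}^D,0)$ aligning the net of thrust, gravity and external force with the line to $\boldsymbol{P}_D$; along that line the translational equations (\ref{ax})--(\ref{az}) collapse to a scalar double integrator whose input is the thrust magnitude, again of the form (\ref{L1}). Setting all four rotors to $v_\textrm{max}$ in Stage 2 gives the maximal accelerating force $A_{s2}$, and after the attitude is flipped in Stage 3 the same full-thrust command in Stage 4 produces the maximal decelerating force $A_{s4}$, which is the bang-bang profile of Lemma \ref{Bangbang} for the displacement. Writing $v_\textrm{peak}$ for the common peak speed and using $d_D=\frac{1}{2}v_\textrm{peak}^2\left(m_D/A_{s2}-m_D/A_{s4}\right)$ together with $T_\textrm{trans}=v_\textrm{peak}\left(m_D/A_{s2}-m_D/A_{s4}\right)$, eliminating $v_\textrm{peak}$ gives the translation time $\sqrt{2d_D\left(m_D/A_{s2}-m_D/A_{s4}\right)}$, the first term of (\ref{TotalCont}); here $A_{s4}$ enters with its sign as a decelerating force, so that $-m_D/A_{s4}>0$. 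The Stage 4 deceleration brings the velocity to zero as required by (\ref{Veloc}), Stage 5 returns the attitude to the hover configuration, and Stage 6 selects the final common speed $v_\textrm{F}$ so that thrust balances gravity and the external force, keeping the drone at rest at $\boldsymbol{P}_D$.

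The main obstacle I anticipate is not the bang-bang integration but the reduction itself: the dynamics (\ref{ax})--(\ref{az}) are coupled to the attitude through nonlinear trigonometric factors, so the delicate part is establishing that, stage by stage, the decomposition renders each subproblem an exact linear double integrator with a single scalar bounded input, the only form that Lemma \ref{Bangbang} covers. This rests on the torque-decoupling bookkeeping through (\ref{Velocit}) (in particular the role of the $1/\sqrt{2}$ allocation) and on Lemma \ref{Lem1} guaranteeing that the chosen attitude indeed maximises the acceleration along $\boldsymbol{P}_D$. Once these are in place, the switching times $\tau_1,\dots,\tau_{14}$ follow by accumulating the per-stage durations, and optimality is understood to hold within the decomposed class of strategies rather than globally, since translation during the reorientation stages is neglected, as already noted in the text.
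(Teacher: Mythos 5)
Your proposal is correct and follows essentially the same route as the paper's proof: decomposing the manoeuvre into attitude and translation double integrators, applying the bang-bang principle of Lemma \ref{Bangbang} to each, using Lemma \ref{Lem1} to align the net force with $\boldsymbol{P}_D$, and summing the per-stage durations (your explicit check that the $\tfrac{1}{\sqrt{2}}$ allocation nulls the roll and yaw torques via (\ref{Velocit}), and your peak-speed elimination for the translation time, are minor presentational variations on the paper's kinematic derivation). The only detail you gloss over that the paper makes explicit is the closed form $v_\textrm{F}=\sqrt{|\boldsymbol{F}_\textrm{ext}|/(4\rho_1)}$ from the hover force balance, but the idea is clearly stated.
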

\begin{proof}
	See Appendix $D$.
\end{proof}

In Theorem \ref{Thoerem3}, Stages 1, 3, and 5 correspond to the orientation changes, Stages 2 and 4 are related to the drone's displacement, and Stage 6 represents the drone's stability condition. Note that $v_\textrm{F}$ is adjusted such that the drone's stability is ensured at its final location.  In (\ref{TotalCont}), $A_{s2}$ and $A_{s4}$ are, respectively, the total forces towards the drone's destination at Stages 2 and 4. 

Using Theorem \ref{Thoerem3}, we can find the speeds of the rotors (at different time instances) that enable each to move towards its destination within a minimum time. The control time depends on the destination of the drone, external forces (e.g. wind and gravity), the rotors' speed, and the drone's weight. 

{\color {black}
\subsection{Collision Avoidance for Moving Drones}
First, we determine a situation in which collision between two drones when updating their locations is possible. Then, we propose a solution to avoid the collision situation.

Consider two adjacent drones that need to change their locations, as shown in Fig. \ref{Array1}. 
Clearly, the minimum distance between drones along their path is $x=d \sin \alpha$, where $\alpha$ and $d$ are shown in Fig. \ref{Array1}. In this case, if $x\ge D_\textrm{min}$, collision does not occur. Therefore, drones can move on a linear path without any collision However, if $x < D_\textrm{min}$, it is possible that the drones collide while they move. One way to avoid collision is to use non-straight paths for drones. For instance, an arc shape trajectory (as shown in Fig. \ref{Array2}) ensures that the distance between adjacent  drones remains above the minimum required distance, $D_\textrm{min}$.



\begin{figure}[t!]
	\centering
		\includegraphics[width=6 cm]{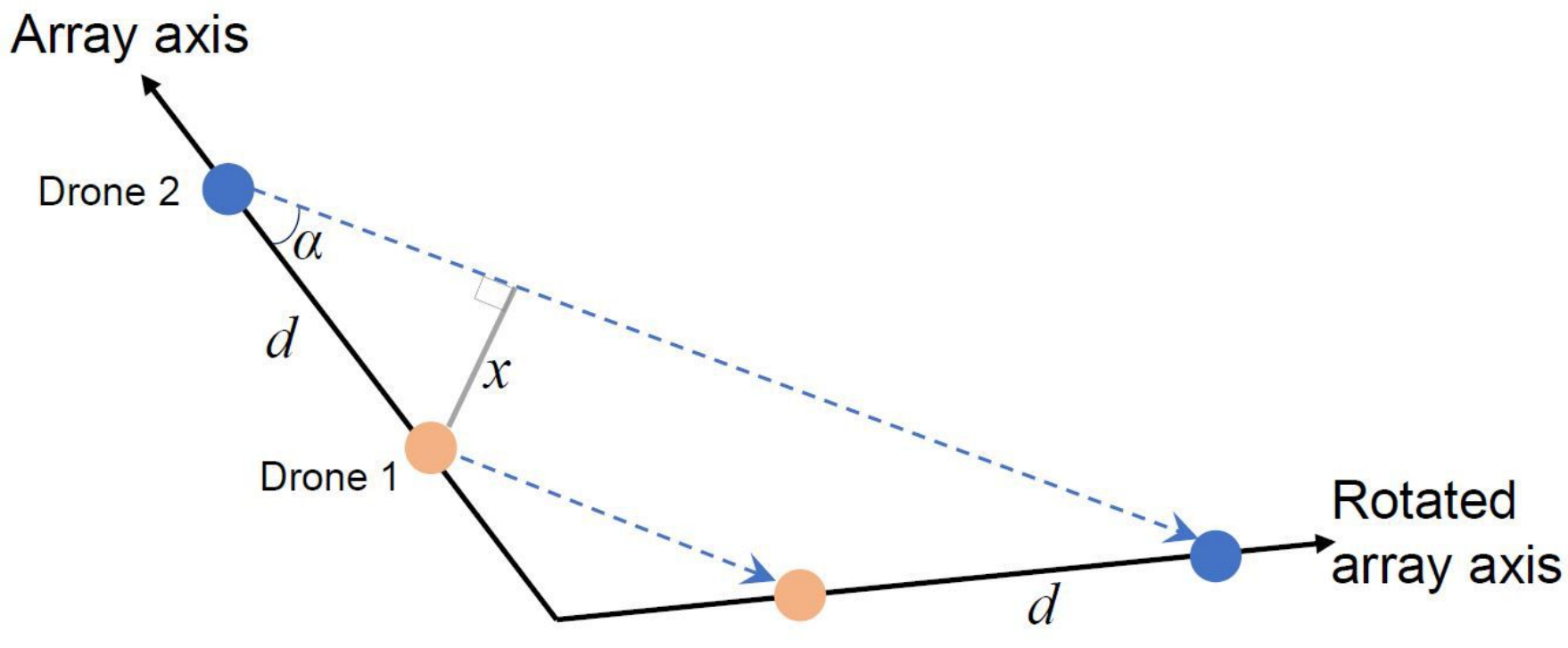}
		\caption{ \small \textcolor{black}{Drones' movements during the antenna array rotation (linear path).}}
		\label{Array1}
	\end{figure}%

	\begin{figure}[t]
		\centering
	\includegraphics[width=6 cm]{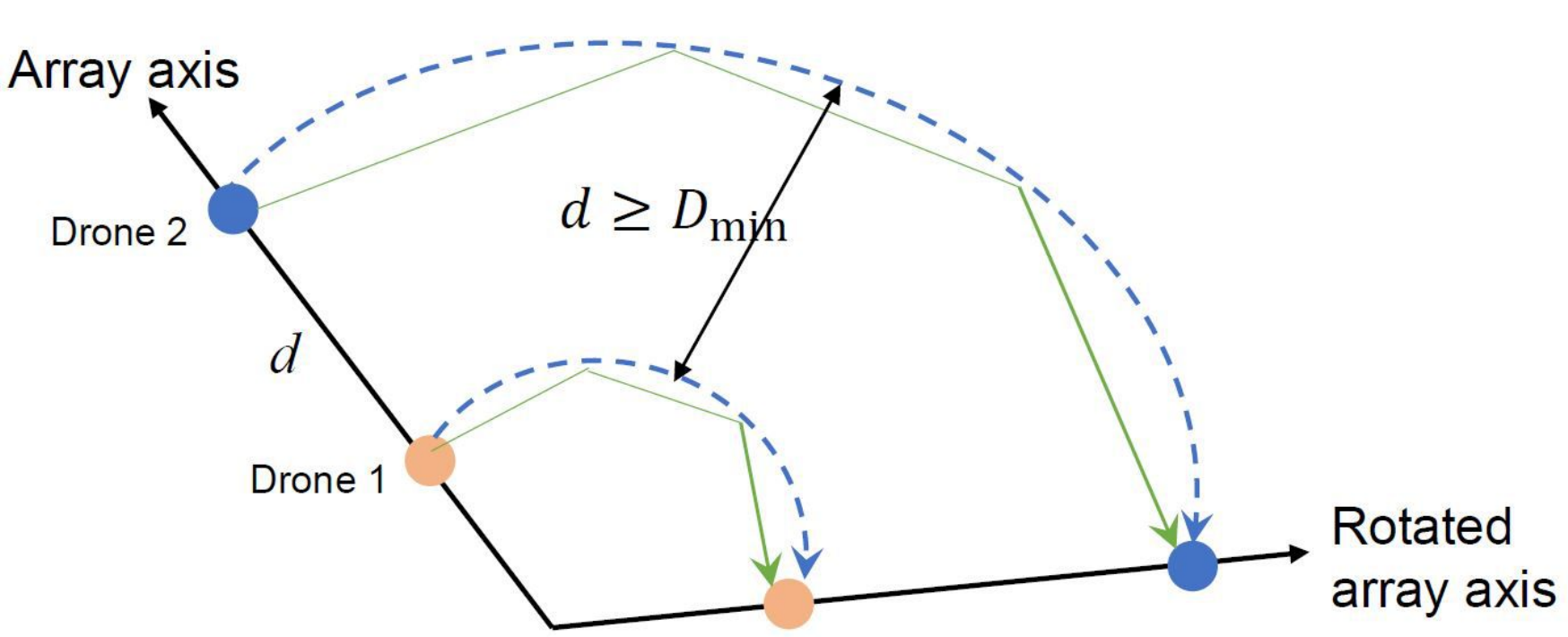}
		\caption{ \small \textcolor{black}{ Drones' movements during the antenna array rotation (arc path).}}
		\label{Array2}
\end{figure}

%

%
%
%

\subsection{User Scheduling Order}
Another factor that can impact the total control time of the drones is the user scheduling order. While any arbitrary user scheduling can be considered in our model, we adopt a scheduling order that yields a minimum total control time. To this end, we solve the following optimization problem which determines the optimal scheduling order:
\begin{align}
&\mathop {{\mathop{\rm minimize}\nolimits} }\limits_{[a_{ij}]_{L \times L}}\sum\limits_{i = 1,i \ne j}^L {\sum\limits_{j = 1}^L {{a_{ij}}{T_{ij}}} }, \label{obj} \\
\textrm{st.}\,\,\,& \sum\limits_{j = 1,j \ne i}^L {{a_{ij}} = 1} ,\,\,\, \forall i\in\mathcal{L},\,\, \sum\limits_{i = 1,i \ne j}^L {{a_{ij}} = 1},\,\,\, \forall j\in\mathcal{L},\label{C1}\\
&{a_{ij}} = \left\{ \begin{array}{l}
1 \,\,\, \textrm{if user $j$ is served after user $i$},\\
0 \,\,\,\textrm{otherwise} ,
\end{array} \right.
\end{align}   
where $L$ is the number of ground users in set $\mathcal{L}$, and $T_{ij}$ is the control time of drones when user $j$ is served after user $i$. $a_{ij}$ is a binary variable which is 1 if user $j$ is served after user $i$, and $[a_{ij}]_{L \times L}$ is a matrix that represents the scheduling order.  Constraint (\ref{C1}) indicates that each user is served only once. The optimization problem in (\ref{obj}) is a classical integer linear programming which can be solved using various methods such as a branch-and-bound algorithm \cite{lawler1966branch}.
 }


In summary, our approach for minimizing the service time, which is composed of the transmission time and the control time, is as follows. In the first step, using the approach in Section III, we minimize the transmission time for each ground user by optimizing the positions of drones with respect to the ground users. Then, based on these determined optimal drones' locations, we minimize the control time needed for adjusting the movement and orientations of drones. In Algorithm \ref{Algo2}, we summarize our approach for minimizing the service time.

\begin{algorithm} [t]
	\begin{small}
		\caption{Steps for minimizing the service time by solving (\ref{OPT1-1}). }\label{Algo2}
		\begin{algorithmic}[1] 
			\State {\textbf{Inputs:} Locations of users, $(x^\textrm{u} _{i},y^\textrm{u} _{i},z^\textrm{u}_i)$, $\forall i \in \mathcal{L}$, and origin of array, $(x_o,y_o,z_o)$.}
			\State \textbf{Outputs:} Optimal drones' positions, $(x^*_{m,i},y^*_{m,i},z^*_{m,i})$, rotors' speeds, \textcolor{black}{$v_{mw}(t)$, $\forall m \in \mathcal{M}, \forall i \in \mathcal{L}, w \in \{1,...,4\}$}, and total service time. 
			\State{Using Algorithm 1, find the optimal locations of drones with respect to each user, $(x^*_{m,i},y^*_{m,i},z^*_{m,i})$.}
			\State {Using Theorem\,3 and Lemma\,2, for each drone, determine the rotors' speeds for moving from $(x^*_{m,i-1},y^*_{m,i-1},z^*_{m,i-1})$ to $(x^*_{m,i},y^*_{m,i},z^*_{m,i})$.}
			\State {Compute the total service time based on (\ref{OPT1-1}), (\ref{OptCont}), and (\ref{TotalCont}).} 	
		\end{algorithmic} 
	\end{small} 
\end{algorithm}

\section{Simulation Results and Analysis} 

For our simulations, we consider a number of ground users uniformly distributed within a square area of size $1\,\text{km}\times 1\,\text{km}$. Unless stated otherwise, the number of users is 100, and the number of drones\footnote{In our simulations, each drone in the array has an omni-directional antenna, as in \cite{Garza, Weif}.} that form a linear array is assumed to be 10. The main simulation parameters are given in Table \ref{TableI}. \textcolor{black}{We compare the performance of our drone-based antenna array system with a case in which a drone-based antenna array uses a  fixed uniform drone separation, without any repositioning. For the benchmark, referred to as \emph{fixed-array} case, we consider half-wavelength drone spacing}\footnote{For the fixed-array case, we consider electronic beam steering with a 3 dB gain loss due to an imperfect phase synchronization.}. 

\begin{table}[!t]
	\normalsize
	\begin{center}
		\caption{ Main simulation parameters.}
		\vspace{-0.1cm}
		\label{TableI}
		\resizebox{9.0cm}{!}{
			\begin{tabular}{|c|c|c|}
				\hline
				\textbf{Parameter} & \textbf{Description} & \textbf{Value} \\ \hline \hline
				$f_c$	&     Carrier frequency     &      300\,MHz     \\ \hline 
				$P_i$	&     Drone transmit power     &     0.1\,\textrm{W}    \\ \hline
				
				$N_o$	&     Total noise power spectral density     &        -157\,dBm/Hz  \\ \hline
				$N$	&     Number of ground users     &   100 \\ \hline

				
				
				$(x_o,y_o,z_o)$	&  Array's center coordinate & (0,0,100) in meters \\ \hline
				
				$q_i$& Load per user & 100\,Mb \\ \hline
				
				$\alpha$	&     Pathloss exponent & 3 \\ \hline
				
				$I_x, I_y$	&    Moments of inertia & $4.9 \times 10^{-3} \textrm{kg.m}^2$  \cite{DroneParameter} \\ \hline
				
				$m_D$ &Mass of each drone & 0.5 kg \\ \hline
				
				$l$& Distance of a rotor to drone's center  & 20\,cm \\ \hline
				
				$\rho_1$ & lift  coefficient  & $2.9 \times 10^{-5}$ \cite{DroneParameter} \\ \hline
				
				$\beta_m-\beta_{m-1}$	&     Phase excitation difference for two adjacent antennas & $\frac{\pi}{5(M-1)}$\\ \hline
		\end{tabular}}
		
	\end{center}\vspace{-0.1cm}
\end{table}

{\color{black}First, we show an example on how the drones are separated in the proposed drone-based antenna array system. This result is provided in Table \ref{Table_Seperation} for two different carrier frequencies. 

%

\begin{table}[!t]
	\color{black}{	
		\begin{center}
			\caption{ \textcolor{black}{Separation distance of adjacent drones in an aerial antenna array with 10 drones.}}
			\vspace{-0.1cm}
			\label{Table_Seperation}
			\resizebox{9.0cm}{!}{
				\begin{tabular}{|c|c|c|}
					\hline
					Drones' separations (cm) & 	Drones' separations (cm), & Compared to wavelength\\, $f_c$=300 MHz, $\lambda$= 1 m & $f_c$=500 MHz, $\lambda$= 0.6\,m&  ($\lambda$)\\ \hline 			
					81.9 &	49.1 & 81.9 $\lambda$ \\ \hline 	
					88.7 &	53.2  & 88.7 $\lambda$  \\ \hline 				
				89.8 &	54.1  & 89.8 $\lambda$  \\ \hline 		
			90.7 &	 54.3  & 90.7 $\lambda$  \\ \hline 
		89.8&	 54.1  & 	89.8 $\lambda$  \\ \hline 		
		88.7&	 53.2  & 	88.7 $\lambda$  \\ \hline 	
		81.9&	 49.1   & 	81.9 $\lambda$  \\ \hline			
			\end{tabular}}
	\end{center}}\vspace{-0.1cm}
\end{table}

Fig. \ref{ServiceTime_BW} shows the total service time for the drone antenna array and the fixed-array case. For a given bandwidth, our proposed drone antenna array outperforms the fixed-array
case in terms of service time. This is due to the fact that, in the proposed approach, the drones' locations (and drone spacing)
are optimized such that the array antenna gain towards each user is maximized, hence reducing the
transmission time. Fig. \ref{ServiceTime_BW} also shows the tradeoff between bandwidth and service time. Clearly,
the service time decreases by using more bandwidth which effectively provides a higher data
rate.  Fig. \ref{ServiceTime_BW} shows that the drone antenna array improves spectral efficiency compared to the
fixed-array case. For instance, to achieve 10 minutes of service time, the drone antenna array
will require 32\% less bandwidth than in the fixed-array scenario.

\begin{figure}[!t]
	\begin{center}
		\vspace{-0.1cm}
		\includegraphics[width=8.0 cm]{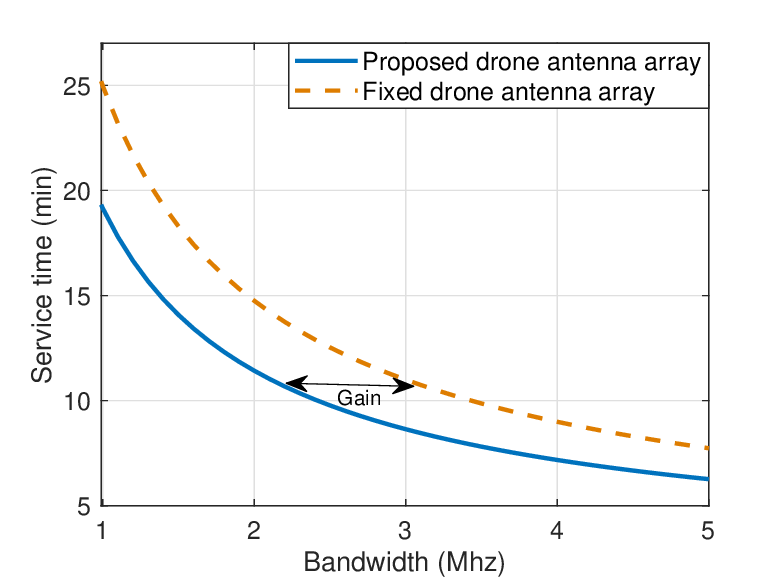}
		\vspace{-0.1cm}
		\caption{ \textcolor{black}{\small Service time vs. bandwidth for the drone antenna-array and fixed-array cases.} }\vspace{-0.4cm}
		\label{ServiceTime_BW}
	\end{center}	
\end{figure} 

In Fig. \ref{ServiceTime_numberofUsers}, we show the impact of the number of users on the service time. Clearly, the service
time increases as the number of users increases. For a higher number of users, the drones must deliver a higher data service which results in a higher transmission time. Moreover, in the proposed
drone antenna array case, the control time also increases while increasing the number of users.
Fig. \ref{ServiceTime_numberofUsers} shows that our proposed drone antenna array system outperforms the fixed-array case
for various number of users. For instance, using our approach, the average service time can be
reduced by 8 minutes (or 27\%) while serving 200 users.  Meanwhile, the users
can receive faster wireless services while exploiting the proposed drone antenna array system. 

\begin{figure}[!t]
	\begin{center}
		\vspace{-0.1cm}
		\includegraphics[width=7.6 cm]{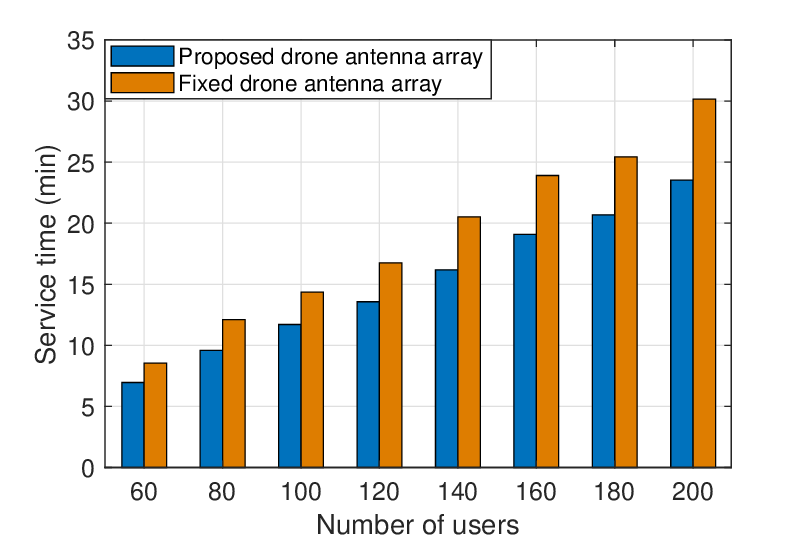}
		\vspace{-0.1cm}
		\caption{ \textcolor{black}{\small Service time vs. number of users for the drone antenna array and fixed-array (2MHz bandwidth).}}\vspace{-0.2cm}
		\label{ServiceTime_numberofUsers}
	\end{center}	
\end{figure}
}


Fig.\,\ref{Time_NumberOfDrones} shows how the control, transmission, and service times resulting from the proposed approach for different numbers of drones in the array. As the number of drones increases, the control time increases. In contrast, the transmission time (for 10 MHz bandwidth) decreases due to the increase of the array gain. Fig.\,\ref{Time_NumberOfDrones} shows that, by increasing the number of drones from 10 to 30, the average control time increases by 20\% while the average transmission time decreases by 36\%. Therefore, there is a tradeoff between the transmission time and the control time as a function of the number of drones in the array. 

\begin{figure}[!t]
	\begin{center}
		\vspace{-0.1cm}
		\includegraphics[width=8.0 cm]{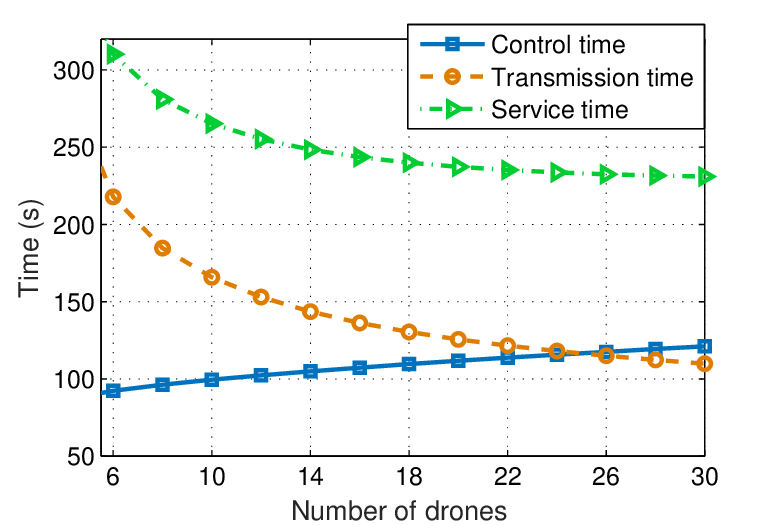}
		\vspace{-0.1cm}
		\caption{ \small Control, transmission, and service times vs. number of drones. }\vspace{-0.2cm}
		\label{Time_NumberOfDrones}
	\end{center}	
\end{figure}

In Fig.\,\ref{Control_users_Vmax=300and500}, we show how the number of users impacts the control time. As we can see from this figure, the control time increases while serving more users. This is due to the fact that, for a higher number of users, the drone-array must move more in order to steer its beam toward the users. The control time can be reduced by increasing the maximum speed of the rotors, which is in agreement with Theorem 3. For instance,  increasing the maximum rotors' speed from 300\,rad/s to 500\,rad/s yields around 35\% control time reduction when serving 200 users.

\begin{figure}[!t]
	\begin{center}
		\vspace{-0.1cm}
		\includegraphics[width=7.0 cm]{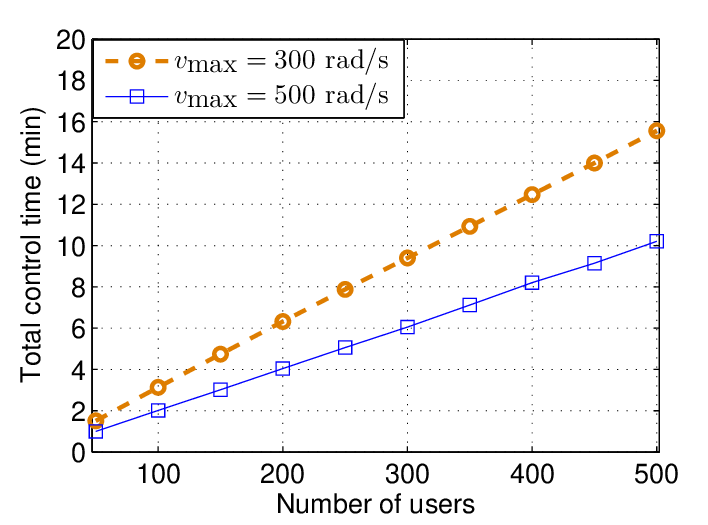}
		\vspace{-0.1cm}
		\caption{ \small Total control time vs. number of users. }\vspace{-0.2cm}
		\label{Control_users_Vmax=300and500}
	\end{center}	
\end{figure}

Fig.\,\ref{Wind_rotorSpeed} represents the speeds of the rotors needed to ensure the drone's stability in presence of wind, obtained using (\ref{vF_stable}). Clearly, the drone is stable when its total force which is composed of the wind force, gravity, and the drone force is zero. For ${F_\textrm{wind}} = |{F_\textrm{wind}}|\overrightarrow x$, the rotor's speed must increase as the wind force increases. In the ${F_\textrm{wind}} = |{F_\textrm{wind}}|\left( {\frac{1}{{\sqrt 3 }}\overrightarrow x  + \frac{1}{{\sqrt 3 }}\overrightarrow y  + \frac{1}{{\sqrt 3 }}\overrightarrow z } \right)$ case, however, the rotor's speed first decreases, and then increases. This is because, when $|{F_\textrm{wind}}|\le 3$\,N, the wind force helps hovering the drone by compensating for the gravity. Hence, the drone's force can be decreased by decreasing the speed of its rotors. For $|{F_\textrm{wind}}|> 3$\,N, the rotor's speed start increasing such that the total force on the drone becomes zero. This result also implies that, in some cases (depending on the magnitude and direction of wind), wind can facilitate hovering of the drone by overcoming the gravity force. However, in case of strong winds, the drone's stability may not be guaranteed by adjusting the speed of the rotors. This is because the drone force, which is limited by the maximum rotors' speeds, cannot overcome the external forces.\vspace{-0.1cm}
\begin{figure}[!t]
	\begin{center}
		\vspace{-0.5cm}
		\includegraphics[width=7.0 cm]{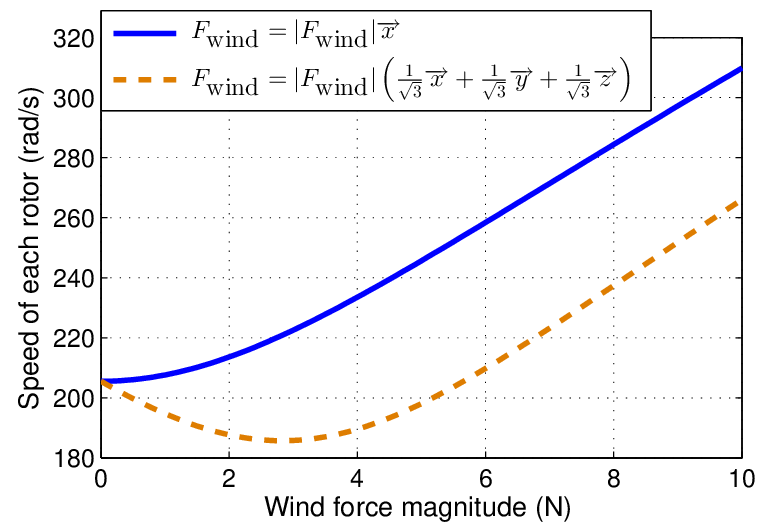}
		\vspace{-0.1cm}
		\caption{ \small Speed of each rotor vs. wind force under the drone's stability condition. }\vspace{-0.2cm}
		\label{Wind_rotorSpeed}
	\end{center}	
\end{figure}

 \section{Conclusion}\vspace{-0.10cm}
 In this paper, we have proposed a novel framework for employing a drone-enabled antenna array system that can provide wireless services to ground users within a minimum time. To this end, we have minimized the transmission time and the control time needed for changing the locations and orientations of the drones. First, we have optimized the positions of drones within the antenna array such that the transmission time for each user is minimized. Next, given the determined locations of drones,   we have minimized the control time of the quadrotor drones by optimally adjusting the rotors' speeds.   Our results have shown that the proposed drone antenna array with the optimal configuration yields a significant improvement in terms of the service time, spectral and energy efficiency. Our results have revealed key design guidelines and fundamental tradeoffs for leveraging in an antenna array system. To our best knowledge, this is the first comprehensive study on the joint communications and control of drone antenna array systems.

\section* {Appendix}

\subsection{Proof of Theorem 1}
 	First, we find $F^2(\theta,\phi)$ by using (\ref{approx}):
 	\begin{align}
 	& F^2(\theta,\phi)= 
 	{\left[ {2{F^0}(\theta ,\phi )} \right]^2}\hspace{-0.10cm} + \nonumber\\\hspace{-0.10cm} &{\left[ {2\sum\limits_{n = 1}^N {{a_n}k{e_n}\sin \theta \cos \phi \sin \left( {kd_n^0\sin \theta \cos \phi  + {\beta _n}} \right)} } \right]^2} \nonumber\\
 	- 8{F^0}&(\theta ,\phi )\sum\limits_{n = 1}^N {{a_n}k{e_n}\sin \theta \cos \phi \sin \left( {kd_n^0\sin \theta \cos \phi  + {\beta _n}} \right)}.\nonumber
 	\end{align}
 	Subsequently, our objective function in (\ref{d3}) can be written as:
 	\begin{align}
 	{I_{{\mathop{\rm int}} }}&\left( {{F^2}(\theta ,\phi ){w^2}(\theta ,\phi )} \right)= \nonumber\\
 	&4\left[ {{\boldsymbol{e}^T}\boldsymbol{G}\boldsymbol{e} - 2{\boldsymbol{e}^T}\boldsymbol{q} + {I_{{\mathop{\rm int}} }}\left( {F_0^2(\theta ,\phi ){w^2}(\theta ,\phi )} \right)} \right], \label{Iint}
 	\end{align}
 	where $\boldsymbol{G}$ and $\boldsymbol{q}$ are given in (\ref{G}) and (\ref{q}). 
 	Clearly, (\ref{Iint}) is a quadratic function of $\boldsymbol{e}$. Therefore, (\ref{Iint}) is convex if and only if $\boldsymbol{G}$ is a positive semi-definite matrix. 
 	Given (\ref{G}), we have:\vspace{-0.1cm}
 	\begin{equation}
 	{\boldsymbol{y}^T}\boldsymbol{G}y = \sum\limits_{n = 1}^N {{y_n}\sum\limits_{m = 1}^N {{y_m}{g_{m,n}}} }. \label{yGy}
 	\end{equation}
 	Now, in (\ref{G}), let us define \vspace{-0.3cm}
 	\begin{equation}
 	\hspace{-0.2cm}{z_n} = {a_n}k\sin \theta \cos \phi w(\theta ,\phi )\sin \left( {kd_n^0\sin \theta \cos \phi  + {\beta _n}} \right),
 	\end{equation}
 	then, using (\ref{yGy}), we have:\vspace{-0.2cm}
 	\begin{equation}
 	{\boldsymbol{y}^T}\boldsymbol{G}\boldsymbol{y} = {I_{{\mathop{\rm int}} }}\left( {{{\left[ {\sum\limits_{n = 1}^N {{z_n}{y_n}} } \right]}^2}} \right).\label{sum}
 	\end{equation}
 	In (\ref{IintF}), we can see that ${I_{{\mathop{\rm int}} }}(x) \ge 0$ for $x\ge0$. Hence, from (\ref{sum}), we can conclude that ${\boldsymbol{y}^T}\boldsymbol{G}\boldsymbol{y} \ge 0$. Therefore, $\boldsymbol{G}$ is positive semi-definite and the objective function in (\ref{d3}) is convex. Moreover, the constraints in (\ref{emin}) are affine functions which are convex. Hence, this optimization problem is convex.  Now, we find the optimal perturbation vector $\boldsymbol{e}$ by using Karush-Kuhn-Tucker (KKT) conditions. The Lagrangian function will be: 
 	\begin{align}
 	\mathcal{L}&={{\boldsymbol{e}^T}\boldsymbol{G}\boldsymbol{e} - 2{\boldsymbol{e}^T}\boldsymbol{q} 
 		+ {I_{{\mathop{\rm int}} }}\left( {F_0^2(\theta ,\phi ){w^2}(\theta ,\phi )} \right)} \nonumber \\
 	&+\sum\limits_{n = 1}^{N - 1} {{\mu _n}\left( {{e_n} - {e_{n + 1}} + {D_{\textrm{min} }} + d_n^0 - d_{n + 1}^0} \right)}, 
 	\end{align} 
 	where $\mu_n \ge 0$, $n=1,...,N-1$ are the Lagrange multipliers. 
 	
 	The necessary and sufficient (due to the convexity of the problem) KKT conditions for finding the optimal perturbation vector $\boldsymbol{e}$ are given by:\vspace{-0.4cm}
 	\begin{equation} \label{LAG}
 	\nabla_{\boldsymbol{e}}\left[\mathcal{L}\right]=0,\vspace{-0.2cm}
 	\end{equation}
 	which leads to $\boldsymbol{e} = {\boldsymbol{G}^{ - 1}}[\boldsymbol{q}+\boldsymbol{\mu_\mathcal{L}}]$, with $\boldsymbol{\mu_\mathcal{L}}$ being a $(N-1)\times 1$ vector whose element $n$ is
 	$\boldsymbol{\mu_\mathcal{L}}(n)=\mu_{n+1}-\mu_n$. Based on the complementary slackness conditions, we have:\vspace{-0.12cm}
 	\begin{equation} \label{SLAK}
 	\begin{cases}
 	&\hspace{-0.3cm} \mu_n\left( {{e_n} - {e_{n + 1}} + {D_{\textrm{min} }} + d_n^0 - d_{n + 1}^0} \right)=0, \,\, \forall n \in \mathcal{N}\,\backslash{\{N\}},\\
 	&\hspace{-0.3cm} \mu_n \ge 0, \,\, \forall n \in \mathcal{N}\,\backslash{\{N\}}.
 	\end{cases}
 	\end{equation}
 	Finally, the optimal perturbation vector, $\boldsymbol{e}^*$, can be determined by solving (\ref{LAG}) and (\ref{SLAK}).
 
 \subsection{Proof of Theorem 2}
 	In Subsection III-A, we have derived the optimal distance of drones from the origin that leads to a maximum array directivity. First, we consider an initial (or arbitrary) orientation,   as shown in Figure \ref{Ther1Fig}. Let $d^*_m$ be the optimal distance of drone $m\le M/2$ from the array's center, $\alpha_o$ and $\gamma_o$ be the initial polar and azimuthal angles of the drone. Based on the considered drones' locations, let $(\theta _{\max },\phi _{\max })=\textrm{argmax}\big[  F(\theta ,\phi )w(\theta ,\phi )\big]$ be a direction at which the directivity of the array is maximized. Our goal is to achieve the maximum directivity at a given direction $(\theta_i ,\phi_i)$ corresponding to user $i$. Therefore, we need to change the locations of the drones such that  $\theta_i=\theta_\textrm{max}$, and 	$\phi_i=\phi_\textrm{max}$. To this end, we align the unit vector  $(1,\theta_\textrm{max}, \phi_\textrm{max})$ with $(1,\theta_i, \phi_i)$  in the spherical coordinate and, then, we update the drones' positions accordingly. In the Cartesian coordinate system, we need to rotate vector $\boldsymbol{q_{\textrm{max}}}= \begin{pmatrix} \sin {\theta_\textrm{max}} \cos {\phi_\textrm{max}}, \sin {\theta_\textrm{max}} \sin {\phi_\textrm{max}}, \cos {\theta_\textrm{max}} \end{pmatrix}^T$ such that it becomes aligned with $\boldsymbol{q_i}=\begin{pmatrix}
 	{\sin {\theta_i} \cos {\phi_i}},
 	{\sin {\theta_i} \sin {\phi_i}},
 	{\cos {\theta_i}}
 	\end{pmatrix}^T$\!\!.

 	The rotation matrix for rotating a vector $\boldsymbol{u}$ about another vector $\boldsymbol{a}={\begin{pmatrix}
 		{{a_x}},
 		{{a_y}},
 		{{a_z}}
 		\end{pmatrix}^{\!T}}\!\!,$ with a $\omega$ rotation angle, is \cite{RotationOrentation}:
 	\begin{equation} \label{R_rotation}	
 	\!\!\! \!\!\! \boldsymbol{R}_\textrm{rot}\!= { \begin{pmatrix}
 		\boldsymbol{R}_{\textrm{rot},1}&
 		\boldsymbol{R}_{\textrm{rot},2}&
 		\boldsymbol{R}_{\textrm{rot},3}
 		\end{pmatrix}},
 	\end{equation} 
 	where \begin{small}$\boldsymbol{R}_{\textrm{rot},1}\!\!=\!\!\begin{pmatrix}
 		a_x^2(1 - \cos\omega ) + \cos\omega \\ {{a_x}{a_y}(1 - \cos\omega ) + {a_z} \sin\omega }\\ {{a_x}{a_z}(1 - \cos\omega) - {a_y} \sin\omega } \end{pmatrix}$,  $\boldsymbol{R}_{\textrm{rot},2}\!\!=\!\!\begin{pmatrix}
 		{{a_x}{a_y}(1 - \cos\omega ) - {a_z} \sin\omega } \\ {a_y^2(1 - \cos\omega ) + \cos\omega }\\ {{a_y}{a_z}(1 - \cos\omega ) + {a_x} \sin\omega} \end{pmatrix}$, and $\boldsymbol{R}_{\textrm{rot},3}\!\!=\!\!\begin{pmatrix}
 		{{a_x}{a_z}(1 - \cos\omega ) + {a_y} \sin\omega } \\ {{a_y}{a_z}(1 - \cos\omega ) - {a_x} \sin\omega }\\ {a_z^2(1 - \cos\omega ) + \cos\omega } \end{pmatrix}$\end{small}.


 In our problem, the rotation between $\boldsymbol{q_{\textrm{max}}}$ and $\boldsymbol{q_i}$ can be done about the normal vector of these vectors, with the rotation angle being the angle between $\boldsymbol{q_{\textrm{max}}}$ and $\boldsymbol{q_i}$. Hence, based on the dot-product and cross-product of vectors, we use $\boldsymbol{a}= \boldsymbol{{{q_i} \times {q_{\max }}}}$, and $\omega= \cos^{-1}(\boldsymbol{{{q_i} \cdot {q_{\max}}}})$ to find the rotation matrix in (\ref{R_rotation}). Now, we update the locations of drones using the rotation matrix. Clearly, for $m \le M/2$, the initial location of drone $m$ in the Cartesian coordinate is $\begin{pmatrix} d^*_m\sin{\alpha_o}\cos{\gamma_o}, d^*_m\sin{\alpha_o}\sin{\beta_o}, d^*_m\cos{\alpha_o}\end{pmatrix}^T$. As a result, the optimal locations of drones for serving user $i$ is given by:
 	\begin{align}
 	&{\begin{pmatrix}x^*_m, y^*_m, z^*_m \end{pmatrix}^T}= \nonumber\\  &\boldsymbol{R}_\textrm{rot} {\begin{pmatrix} d^*_m\sin{\alpha_o}\cos{\gamma_o}, d^*_m\sin{\alpha_o}\sin{\beta_o}, d^*_m\cos{\alpha_o}\end{pmatrix}^T}\hspace{-0.1cm},\nonumber \\
 	&\textrm{if}\,\, m \le M/2.
 	\end{align}
 	Finally, due to the symmetric configuration of the antenna array about the origin, the optimal locations of drones $m$ when $m>M/2$ are as follows:
 	\begin{align}
 	&{\begin{pmatrix}x^*_m, y^*_m, z^*_m \end{pmatrix}^T}= \nonumber \\ &-\boldsymbol{R}_\textrm{rot}\, {\begin{pmatrix} d^*_m\sin{\alpha_o}\cos{\gamma_o}, d^*_m\sin{\alpha_o}\sin{\beta_o}, d^*_m\cos{\alpha_o}\end{pmatrix}^T}, \nonumber \\
 	&\textrm{if}\,\, m \le M/2.
 	\end{align}
 	This completes the proof.
	
\subsection{Proof of Lemma 2}
 	To maximize the drone's acceleration towards the given location $D$, we need to maximize the total force in the direction of $\boldsymbol{P}_D$. Considering the center of the drone as the origin of the Cartesian and spherical coordinate systems, we can present the vectors of forces and the movement as in Fig.\,\ref{Forces}. In this figure, based on the Cartesian-to-spherical coordinates transformation, the polar and azimuthal angles in the spherical coordinate are given by ${\theta _\textrm{ex}} = {\cos ^{ - 1}}\left( {\frac{{{F_{\textrm{ex},z}}}}{{|{\boldsymbol{F}_\textrm{ex}}|}}} \right)$, ${\phi _\textrm{ex}} = {\tan ^{ - 1}}\left( {\frac{{{F_{\textrm{ex},y}}}}{{{F_{\textrm{ex},x}}}}} \right)$, ${\phi _D} = {\tan ^{ - 1}}\left( {\frac{{{y_D}}}{{{x_D}}}} \right)$, and ${\theta _D} = {\cos ^{ - 1}}\left( {\frac{{{z_D}}}{{|{\boldsymbol{P}_D}|}}} \right)$. Let $\alpha$ and $\beta$ be, respectively, the polar and azimuthal angles of the drone's force. Here, we seek to determine $\alpha$ and $\beta$ such that the drone can move towards location $D$ with a maximum acceleration (i.e., maximum total force). In this case, the total force $\boldsymbol{F}_\textrm{ex}+\boldsymbol{F}$ must be in the same direction as $\boldsymbol{P}_D$. Let $\gamma$ be the angle between $\boldsymbol{F}$ and $\boldsymbol{P}_D$, and $q$ be the angle between $\boldsymbol{F}_\textrm{ex}$ and $\boldsymbol{P}_D$. To ensure that $\boldsymbol{F}_\textrm{ex}+\boldsymbol{F}$ is in the direction of $\boldsymbol{P}_D$, we should have:\vspace{-0.35cm}
 	\begin{equation}
 	|{\boldsymbol{F}_{{\textrm{ex}}}}|\sin \gamma  = |\boldsymbol{F}|\sin q = F\sin q.
 	\end{equation}
 	Also, using the inner product formula, $\gamma$ is given by:
 	\begin{equation}
 	\gamma  = {\cos ^{ - 1}}\left( {\frac{{{\boldsymbol{F}_\textrm{ex}}\cdot{\boldsymbol{P}_D}}}{{|{\boldsymbol{F}_\textrm{ex}}||{\boldsymbol{P}_D}|}}} \right).
 	\end{equation}
 	As a result, $q$ will be:\vspace{-0.2cm}
 	\begin{equation}
 	{{q = }}{\sin ^{ - 1}}\left( {\frac{{|{\boldsymbol{F}_{ex}}|}}{{|\boldsymbol{F}|}}\sin \left[ {{{\cos }^{ - 1}}\left( {\frac{{{\boldsymbol{F}_{ex}}.{\boldsymbol{P}_D}}}{{|{F_{ex}}||{P_D}|}}} \right)} \right]} \right).
 	\end{equation}
 	Now, based on the law of cosines, the total force magnitude is equal to:
 	\begin{align} \label{A_Force}
 &	A \mathop  = \limits^\Delta |\boldsymbol{F}_\textrm{ex}+\boldsymbol{F}|= \nonumber\\
 	&{\left[ {F^2 + |{\boldsymbol{F}_\textrm{ex}}{|^2} + 2F|{\boldsymbol{F}_\textrm{ex}}|\cos \left( {\gamma  + {{\sin }^{ - 1}}\left( {\frac{{|{\boldsymbol{F}_\textrm{ex}}|}}{{F}}\sin \gamma } \right)} \right)} \right]^{1/2}}.
 	\end{align}
 	
 	By projection $(\boldsymbol{F}_\textrm{ex}+\boldsymbol{F})$, $\boldsymbol{F}_\textrm{ex}$, and $\boldsymbol{F}$ on $z$-axis and $x-y$ plane, we have:
 	\begin{align}
 	&A\cos {\theta _D} = |{\boldsymbol{F}_\textrm{ex}}|\cos {\theta _\textrm{ex}} + F\cos \alpha, \\
 	&	|{\boldsymbol{F}_\textrm{ex}}|\sin {\theta _\textrm{ex}}\sin \left( {{\phi _D} - {\phi _\textrm{ex}}} \right) = F\sin \alpha \sin \left( {{\phi _D} - \beta } \right).
 	\end{align}
 	Subsequently, we obtain $\alpha$ and $\beta$ as follows:
 	\begin{align}
 	&\alpha = {\cos ^{ - 1}}\left[ {\frac{{A\cos {\theta _D} - |{\boldsymbol{F}_\textrm{ex}}|\cos {\theta_\textrm{ex}}}}{{F}}} \right],\\
 	&\beta = {\phi _D} - {\sin ^{ - 1}}\left[ {\frac{{|{\boldsymbol{F}_\textrm{ex}}|\sin {\theta _\textrm{ex}}\sin \left( {{\phi _D} - {\phi _\textrm{ex}}} \right)}}{{F\sin \psi _p^D}}} \right].
 	\end{align}
 	Finally, considering the fact that the drone's force is perpendicular to its rotors' plane, as well as using the transformation between body-frame and earth-frame,  the drone's orientation can be given by\footnote{We consider $(0,0,0)$ as the initial  orientation. To change the orientation, we first update the pitch and, then, update the roll.}\vspace{-0.1cm}:
 	\begin{equation}
 	\psi _\textrm{p}^D=\alpha, 	\psi _\textrm{r}^D={\tan ^{ - 1}}\left( {\tan \beta \times \sin \psi _p^D} \right), 	\psi _\textrm{y}^D=0,
 	\end{equation}
 	which proves Lemma \ref{Lem1}.\vspace{-0.1cm}

 \subsection{Proof of Theorem 3}
 	Let $s(t)$ be the distance that the drone moves towards destination $D$ at time $t$. We define state $\boldsymbol{g}(t)=\left[
 	{s(t)} , {\dot s(t)} \right]^T$, and provide the following equation:
 	\begin{equation} \label{gt}
 	\boldsymbol{\dot g}(t) = \left[ {\begin{array}{*{20}{c}}
 		0&1\\
 		0&0
 		\end{array}} \right]\boldsymbol{g}(t) + \left[ {\begin{array}{*{20}{c}}
 		0\\
 		1
 		\end{array}} \right]{a_D}(t),
 	\end{equation}
 	where $a_\textrm{min}\le{a_D}(t)\le a_\textrm{max}$ is the drone's acceleration towards $D$, with $a_\textrm{min}$ and $a_\textrm{max}$ being the minimum and maximum values of ${a_D}(t)$. Clearly, the drone can reach the destination and stop at $D$ within duration $T$, if $\boldsymbol{g}(T)=[0,0]^T$. Based on Lemma \ref{Bangbang}, $T$ is minimized when ${a_D}(t) = 
 	\begin{cases}
 	{a_\textrm{max }},\,\,\,&0< t \le \tau,\\
 	{a_\textrm{min }},\,\,\,  &\tau<t\le T.
 	\end{cases}$.
 	Now, we find $\tau$ by using kinematic equations that describe an object's motion. Let $d_{D}$ be the distance between the initial and the final locations of the drone. Clearly, the drone's displacement until $t=\tau$ is equal to $\frac{1}{2}a_\textrm{max}\tau^2$. During $\tau<t\le T$, the displacement will be $\frac{1}{2}a_\textrm{min}(T-\tau)^2+a_\textrm{max}\tau(T-\tau)$. Hence, the total drone's disparagement is: 
 	\begin{equation} \label{kene1}
 	d_{D}=\frac{1}{2}a_\textrm{max}\tau^2+\frac{1}{2}a_\textrm{min}(T-\tau)^2+a_\textrm{max}\tau(T-\tau).
 	\end{equation}
 	
 	Also, considering the fact that drone stops (i.e. zero speed) at $t=T$, we have: \vspace{-0.2cm}
 	\begin{equation}\label{kene2}
 	a_\textrm{max}\tau+a_\textrm{min}(T-\tau)=0,
 	\end{equation}
 	According to (\ref{kene1}) and (\ref{kene2}), the total control time, $T$, and the switching time can be found by:
 	\begin{align}
 	&T=\sqrt{2d_{D}(\frac{1}{a_\textrm{max}}-\frac{1}{a_\textrm{min}})},\label{bigT}\\
 	&\tau= \frac{a_\textrm{min}}{a_\textrm{min}-a_\textrm{max}}T.
 	\end{align}
 	
 	As we can see from (\ref{bigT}), $T$ can be minimized by maximizing $a_\textrm{max }$ and minimizing $a_\textrm{min}$. To this end, we will adjust the drone's orientation as well as the rotors' speeds. Each drone's orientation can be determined by using Lemma \ref{Lem1}. Also, given (\ref{Velocit})-(\ref{az}), we can show that the optimal speeds of the rotors are ${v_1} = {v_2} = {v_3} = {v_4} = {v_{\max }}$.  
 	
 	To adjust the drone's orientation within a minimum time, we minimize the time needed for the pitch and roll updates. Using a similar approach as in (\ref{gt}), and considering (\ref{Velocit}), (\ref{pitc}), (\ref{rol}), and  zero yaw angle (i.e. $v_2^2+v_4^2=v_1^2+v_3^2$ ), the optimal rotors' speeds can be given by:
 	\begin{align} \label{78}
	\hspace{-0.1cm}	 &\textrm{positive change of pitch angle:} \nonumber\\
 &	\begin{cases}
 	{v_2} = 0,{v_1} = {v_3} = \frac{1}{\sqrt{2}}{v_\textrm{max}},{v_4} = {v_{\max }},\,\, &\textrm{if}\,\,\, 0 < t \le {\tau _1},\\
 	{v_4} = 0,{v_1} = {v_3} = \frac{1}{\sqrt{2}}{v_\textrm{max}},{v_2} = {v_{\max }},\,\,\ &\textrm{if}\,\,\,{\tau _1} < t \le {\tau _2},
 	\end{cases}
 	\end{align} 
 	
 	\begin{align} \label{79}
 &	\hspace{-0.1cm}  	\textrm{positive change of roll angle:} \nonumber\\
 	&\begin{cases}
 	{v_1} = 0,{v_2} = {v_4} = \frac{1}{\sqrt{2}}{v_\textrm{max}},{v_3} = {v_{\max }},\,\,\ &\textrm{if}\,\,\,{\tau _2} < t \le {\tau _3},\\
 	{v_3} = 0,{v_2} = {v_4} = \frac{1}{\sqrt{2}}{v_\textrm{max}},{v_1} = {v_{\max }},\,\,\ &\textrm{if}\,\,\,{\tau _3} < t \le {\tau _4},
 	\end{cases}
 	\end{align} 
 	Therefore, in the first Stage, the drone changes its orientation such that it can move towards $D$ in presence of external forces (e.g., gravity and wind). In the second Stage, the drone moves with a maximum acceleration. In Stage 3, the drone's orientation changes to minimize the acceleration towards $D$. In Stage 4, the drone moves with a minimum acceleration. In Stages 5 and 6, the drone's orientation and the rotors' speeds are adjusted to ensure the stability of drone at $D$. Clearly, the drone will be stable when its total force, $A$ given in (\ref{A_Force}), is zero. Hence, we must have $F=|\boldsymbol{F}_\textrm{ext}|$. Using (\ref{Velocit}) with $T_\textrm{tot}=|\boldsymbol{F}_\textrm{ext}|$, the rotors' speeds in the stable stage is: 
 	\begin{equation} \label{vF_stable}
 	v_\textrm{F}=\sqrt{\frac{|\boldsymbol{F}_\textrm{ext}|}{4\rho_1}}. 
 	\end{equation}
 	The rotors' speed in Stages 1-6 are given in  (\ref{Stage1})-(\ref{Stage6}). 
 	
 	In order to find the switching times, we use the dynamic equations of the drone given in (\ref{Velocit}-\ref{pitc}). For instance, in Stage 1, the time needed for a  $\Delta {\psi _{\textrm{p},1}}$ pitch angle change can be obtained using (\ref{Velocit}) and (\ref{pitc}). In this case, given the rotors' speed in (\ref{Stage1}), and the dynamic equations of the drone, we can find $\tau_1$ and $\tau_2$ as: \vspace{-0.2cm} 
 	\begin{align}
 	&{\tau _1} = \frac{1}{{{v_{\max }}}}\sqrt {\frac{{\Delta {\psi _{\textrm{p},1}}{I_y}}}{{l{\rho _1}}}},\,\,\,\tau_2=2\tau_1, \label{Oriane}
 	\end{align}
 	where $\Delta {\psi _{\textrm{p},1}}$ is the change of pitch angle at Stage 1. Likewise, $\tau_3$ and $\tau_4$ can also be determined.
 	
 	In Stage 2, the time needed for moving within a $d_\textrm{s2}$ distance is given by:
 	\begin{equation}
 	{t_{s2}} = \sqrt {\frac{{2{d_{s2}}A_{s2}}}{{{m_D}}}},
 	\end{equation}
 	where $A_{s2}$ is the total force towards the drone's destination at Stage 2 which can be determined using (\ref{A_Force}). Subsequently, we can find the switching time by $\tau_5=\tau_4+t_{s2}$.
 	
 	The switching times in Stages 3-5 can be determined by adopting the similar approach used in Stages 1 and 2. Note that, $\tau_{14}$ represents the total control time the drone, which can be determined based on (\ref{bigT}) and (\ref{Oriane}) as follows: \vspace{-0.02cm} 
 	\begin{equation}
 	T_{I,D}=\tau_{14}=\sqrt{2d_{D}\Big(\frac{m_D}{A_{s2}}-\frac{m_D}{A_{s4}}\Big)}+T^O,
 	\end{equation}
 	where $A_{s4}$ is the total force on the drone as Stage 4. $T^O$ is the total control time needed for the orientation changes in Stages 1,3, and 5, given by:
 	\begin{align}
 	T^O=&\frac{2}{{{v_{\max }}}}\Big[\sqrt {\frac{{\Delta {\psi _{\textrm{p},1}}{I_y}}}{{l{\rho _1}}}}+\sqrt {\frac{{\Delta {\psi _{\textrm{r},1}}{I_x}}}{{l{\rho _1}}}}+\sqrt {\frac{{\Delta {\psi _{\textrm{p},3}}{I_y}}}{{l{\rho _1}}}}\nonumber\\
 	&+\sqrt {\frac{{\Delta {\psi _{\textrm{r},3}}{I_x}}}{{l{\rho _1}}}}+\sqrt {\frac{{\Delta {\psi _{\textrm{p},5}}{I_y}}}{{l{\rho _1}}}}+\sqrt {\frac{{\Delta {\psi _{\textrm{r},5}}{I_x}}}{{l{\rho _1}}}}\Big],
 	\end{align}
 	where $\Delta\psi _{\textrm{p},i}$, $\Delta\psi _{\textrm{r},i}$ are the pitch and roll changes in Stage $i$.
 	This completes the proof. \vspace{-0.1cm}

\def\baselinestretch{1.05}
\bibliographystyle{IEEEtran}

\bibliography{references}
\end{document}